\documentclass[a4paper, 11pt]{article}
\usepackage[utf8]{inputenc}
\usepackage{fullpage}
\usepackage[margin=1in, letterpaper]{geometry}
\usepackage{my_package}
\usepackage{multirow}
\usepackage{microtype}
\usepackage{XCharter}
\usepackage{dirtytalk}
\usepackage{thm-restate}
\usepackage{enumitem}
\usepackage{fancyhdr}
\usepackage{xfrac}
\usepackage{anyfontsize}
\usepackage{todonotes}                              
\usepackage[normalem]{ulem}

\date{}

\title{Submodular Maximization over Bipartite Perfect Matchings and Matroid Intersection Bases}

 \author{Chandra Chekuri\footnote{University of Illinois Urbana-Champaign. \href{mailto:chekuri@illinois.edu}{chekuri@illinois.edu}} \and Lars Rohwedder\footnote{University of Southern Denmark. \href{mailto:rohwedder@sdu.dk}{rohwedder@sdu.dk}} \and Neta Singer\footnote{EPFL Lausanne. \href{mailto:neta.singer@epfl.ch}{neta.singer@epfl.ch}} \and Jan Vondr\'ak\footnote{Stanford University. \href{mailto:jvondrak@stanford.edu}{jvondrak@stanford.edu}} \and Rico Zenklusen\footnote{ETH Z\"urich. \href{mailto:ricoz@ethz.ch}{ricoz@ethz.ch}}}

\begin{document}

\maketitle

\begin{abstract}
Motivated by applications in fairness and foundational questions, we consider the problem of maximizing a monotone submodular function $f\colon 2^E \rightarrow \mathbb{R}_+$ over maximum cardinality sets in the intersection of two matroids on a common ground set $E$.
An important special case is submodular perfect matching in bipartite graphs.
Prior to this work, its approximability was poorly understood with 
only constant inapproximability known, despite not even a
$\sfrac{1}{o(\sqrt{|E|})}$-approximation being known. Even when allowing to violate the
cardinality constraint slightly,
only a bicriteria approximation with a significant loss in the objective was known.
Here, we obtain two results.

First, we show that, within constant factors, the problem is approximation-equivalent to Submodular Orienteering in directed graphs. This yields an $\Omega(\sfrac{1}{\log |E|})$-approximation in quasi-polynomial time together with an almost-matching hardness result.

Second, we obtain an improved polynomial-time bicriteria approximation via a local search framework.
More precisely, if $f(T^*)$ is the largest submodular value of a common independent set in both matroids of size at least $K$, we find a common independent set $T$ such that $|T| \geq (1 - \epsilon) K$ and $f(T) \geq (1/2 - \epsilon) f(T^*)$.
In contrast, previous work only guarantees a value of $\Omega(\epsilon) f(T^*)$ while ensuring that $|T| \geq (1 - \epsilon) K$.

\end{abstract}

\section{Introduction}
Submodularity plays a central role in combinatorial optimization.
Here we are concerned with submodular set function maximization subject to various combinatorial constraints. This is a well-studied topic in combinatorial optimization with many old and new applications. Classical work dates back to the work of Fisher, Nemhauser and Wolsey \cite{nemhauser1978analysis,nemhauser1978best} and there has been an explosion of work on both theoretical aspects and applications; see \cite{krause2014submodular,buchbinder2018submodular,bach2013learning,bilmes2022submodularity,dutta2026submodular} for a small sample.  
The specific problem of interest to us is when the submodular function $f\colon 2^E \to \mathbb{R}_+$ is monotone and the constraint is to find a set $S$ of cardinality \emph{at least} a given number $K$ in the intersection of two matroids $\cM_1=(E,\cI_1)$ and $\cM_2 = (E,\cI_2)$. One can, without loss of generality, assume that $K$ is the largest cardinality of a set in the intersection of the two matroids (by truncating the matroids as needed). Therefore, the problem is to maximize $f$ over common bases of $\Mcal_1$ and $\Mcal_2$.
A simple and canonical special case is when $E$ is the edge-set of a bipartite graph $G=(V,E)$ and the goal is to maximize $f$ subject to finding a \emph{perfect matching} in $G$. A key motivation for this comes from a fairness application that we will describe later.

A difficulty that arises here is that the feasible sets do not form a \emph{down-closed} family.
While known works obtain a constant-factor approximation when we seek to maximize $f$ over matchings (which form a down-closed family), only an $\Omega(1/\min\{n,\sqrt{m}\log m\})$-approximation (with $m$ being the number of edges and $n$ the number of vertices) was known for the perfect matching variant previously~\cite{mahabadi2026improved}. Additionally, the following bicriteria result was known. Given a parameter $\epsilon$, one can find a common independent set $T$ of cardinality at least $(1-\epsilon)K$ such that
$f(T) \ge \Omega(\epsilon) f(\spm)$\footnote{In fact, \cite{mahabadi2026improved} obtain an $\Omega(\epsilon)$ approximation of the optimal common independent set irrespective of its cardinality. In this work we compare to the more natural optimum over all common independent sets of size at least $K$. For this natural objective, prior work still can only guarantee $\Omega(\epsilon)$-approximation, for example when the two objectives are nearly equal.  Also, it is easy to see that if the two objectives differ greatly, one cannot achieve a good guarantee relative to the global optimum.} \cite{mahabadi2026improved}.

In this paper, we obtain two results that significantly advance our understanding of the problem.
First, we make a connection between our problem and the well-studied Submodular Orienteering problem in directed graphs \cite{ChekuriPal05}, which is another example of optimizing over a non-down-closed constraint. We show that the two problems are approximation-equivalent within a small constant factor. Via this (non-obvious) connection we obtain an $\Omega(\sfrac{1}{\log K})$-approximation in quasi-polynomial time, and also an almost matching hardness result. 
Using a recent method for submodular orienteering \cite{rohwedder2026markov}, we also obtain, for any $\epsilon > 0$, an $\Omega(1/K^\epsilon)$-approximation in time $|E|^{O(1/\epsilon)}$ in the polynomial-time regime.
Second, we obtain an improved polynomial-time bicriteria approximation via a local search framework. We show that if $\spm \in \argmax \{f(T): T \in \cI_1 \cap \cI_2, |T| \geq K \}$, the algorithm finds a solution $T$ such that $|T| \geq (1 - \epsilon) K$ and $f(T) \geq (1/2 - \epsilon) f(\spm)$.

\subsection{Related work}
Matroid constraints have received significant attention in combinatorial optimization due to their versatility and useful algorithmic properties. Maximization of a monotone submodular function subject to a single matroid constraint is well understood, with a $(1-1/e)$-approximation algorithm \cite{calinescu2011maximizing,filmus2012tight} and matching hardness result \cite{nemhauser1978analysis,feige1998threshold}. For the matroid constraint, maximizing over independent sets or bases is equivalent, since any independent set can be extended to a base (a maximum-cardinality independent set) of at least as much value due to the monotonicity property of $f$ and the extendability property in matroids. In contrast, maximization of a non-monotone submodular function over matroid bases is hard within any constant factor \cite{vondrak2013symmetry}, while the same problem over independent sets in a matroid admits a constant factor with the current best approximation ratio being $0.401$ \cite{buchbinder2024constrained}.
Many new ideas and techniques have been developed to obtain these and other
results and we refer the reader to \cite{buchbinder2018submodular} for a survey though there have been several important developments since then.

For more complicated constraints, different behaviors for maximum-cardinality and arbitrary solutions can be observed already with monotone objective functions. Consider maximization over matchings: there is a $(1/2-\epsilon)$-approximation for maximizing a monotone submodular function over matchings in a given graph \cite{lee2010submodular}, but this does not imply an algorithm for maximization over {\em perfect matchings} since not every matching can be extended to a perfect matching. This is not an issue for maximization of an additive weight function, since maximum weight perfect matchings can be reduced to maximum weight matchings, for example, by increasing the weight of each edge by the same large value.
However, this reduction is not approximation-preserving.
Although this is not an issue for additive functions, where maximum weight matching can be solved exactly, it is a significant obstacle for submodular maximization over perfect matchings.
In this paper, we address exactly this problem, and more generally the maximization of a monotone submodular function over maximum-cardinality (as well as near-maximum cardinality) sets in the intersection of two matroids.

One concrete motivation is a line of work in recent years on submodular maximization over matroid/matching constraints with additional packing, fairness, or linear constraints imposed on the solution (for example, \cite{mahabadi2026improved}, \cite{celis2017multiwinner}, \cite{halabiFairnessSubmodularMaximization2024} to name a few of the many such works). These naturally lead to constraints that are not down-closed. Past work, even in the additive setting, has often dealt with such constraints by slightly relaxing them. See for example \cite{grandoni2014new,chekuri2011multi}.

\subsection{Our contributions}
We state our results formally here. We let $K\in \mathbb{Z}_{+}$ be a cardinality parameter, which, depending on the context, we can view as an equality constraint or a lower-bound constraint.
 We now formally define the problems that we consider.

\paragraph{Submodular Matroid Intersection Basis:}
The input for this problem are two matroids $\Mcal_1 = (E, \Ical_1), \Mcal_2 = (E, \Ical_2)$ accessed via an independence oracle and a submodular function $f \colon 2^E \to \RR_+$ given by a value oracle. The goal is to find a common basis $T$ of $\Mcal_1$ and $\Mcal_2$ which maximizes $f(T)$.

We will extend this formulation to common independent sets of size $K$ for a cardinality parameter $K$. This is equivalent by truncation of the matroids. Because our statements below compare to the best common independent set of cardinality $K$ (or at least $K$), they only make sense if there is a common independent set of cardinality $K$.
We therefore assume this always to be the case.
Of course, this can be checked upfront with a classical matroid intersection algorithm.
Moreover, we denote the ground set size by $|E|$.

We will show an equivalence of this problem with Submodular Orienteering also defined here. 

\paragraph{Submodular Orienteering: } 
An instance of Submodular Orienteering consists of a directed graph $G= (V,A)$, where $V$
are the vertices and $A$ the arcs and $n \coloneq \card{V}$, and a nonnegative and monotone submodular function $f\colon 2^V \rightarrow \RR_{\geq 0}$ given by a value oracle on its vertices. Moreover, there is a start vertex $s\in V$ and an end vertex $t \in V$, which need not be distinct. Lastly, the graph is also given with arc lengths $l(u,v)$ for every $ (u,v) \in A$ and a budget $L$. The goal is to find an $s$-$t$ walk $W$ in $G$ so that the vertices $V(W)$ that are on the walk $W$ maximize $f(V(W))$ and the walk satisfies the length bound $l(W) \leq L$. 

\bigskip

\noindent
Our main bicriteria tradeoff is given by the following theorem, which relaxes the cardinality constraint.

\begin{restatable}{mythm}{bicriteriaguarantee}\label{thm:main-matroid-intersection}
For any constant $\epsilon>0$, there is a polynomial-time algorithm that, given matroids $\cM_1 = (E,\cI_1)$, $\cM_2 = (E,\cI_2)$, a monotone submodular function $f\colon 2^E \to \RR_+$, and a parameter $K \in \ZZ_+$, outputs a solution $T \in \Ical_1 \cap \Ical_2$ that satisfies $(1-\epsilon) K \leq |T| \leq K$ and $f(T) \geq \left(\frac{1}{2} - \epsilon \right) \max \{ f(S): S \in \cI_1 \cap \cI_2, |S| = K\}$.
\end{restatable}

This result is derived through a well-chosen local search procedure that starts with a common basis.
It iteratively improves the current solution, if needed by decreasing its cardinality by one unit at a time if the improvement is big enough.
To show that there exist local search improvements as long as the current solution is not good enough, we show the existence of good short swapping sets.
This is done by cutting larger improving swaps, which always exist as long as we are not optimal, into smaller pieces.
Whereas this particular step is not hard in the context of matchings, when dealing with matroid intersection, one needs to make sure that legal small swaps are obtained after cutting a legal larger swap.
We do so by leveraging a technique of \cite{lee2010submodular}.

We believe that the techniques of \Cref{thm:main-matroid-intersection} extend in a canonical way to submodular perfect (non-bipartite) matchings. This corresponds to approximating a solution of size $K = \sfrac{|V|}{2}$.
In the interest of brevity, we only focus on the matroid intersection case here, which contains all relevant ideas.
However, it does not contain submodular non-bipartite perfect matching as a special case, but only the bipartite case.

\begin{mycor}\label{coro:lowerbound-matroid-intersection}
For any constant $\epsilon>0$, there exists a polynomial-time algorithm that, given matroids $\cM_1 = (E,\cI_1)$, $\cM_2 = (E,\cI_2)$, a monotone submodular function $f\colon 2^E \to \RR_+$, and a parameter $K \in \ZZ_+$, outputs a solution $T \in \Ical_1 \cap \Ical_2$ that satisfies $|T| \geq (1-\epsilon) K$ and $f(T) \geq \left(\frac{1}{2} - \epsilon \right)  \max \{ f(S): S \in \cI_1 \cap \cI_2, |S| \geq K\}$.
\end{mycor}

\Cref{coro:lowerbound-matroid-intersection} immediately follows from applying \Cref{thm:main-matroid-intersection} for all values of $K'$ from $K$ to $|E|$ and returning the best solution. This outcome will satisfy $|T| \geq (1-\epsilon) K' \geq (1-\epsilon) K$ and the lower bound on submodular value.

The results above relax slightly the cardinality bound in order to achieve a constant factor approximation. If one wants to guarantee the cardinality $|T| = K$ exactly, then the same techniques cannot be considered. Instead, we show a connection between Submodular Matroid Intersection Basis and Submodular Orienteering.

\begin{restatable}{mythm}{orienteerreduce}\label{thm: running time guarantee exact basis}
Given matroids $\cM_1 = (E,\cI_1)$, $\cM_2 = (E,\cI_2)$, a monotone submodular function $f\colon 2^E \to \RR_+$, and a parameter $K \in \ZZ_+$, there exists an algorithm that returns a set $T$ in $\Ical_1 \cap \Ical_2$ satisfying $|T| = K$ and $f(T) \geq \frac{1}{\alpha + 1+ \epsilon} \max \{ f(S): S \in \cI_1 \cap \cI_2, |S| = K\}$ for any choice of constant $\epsilon > 0$ in polynomially many calls to an instance  of Submodular Orienteering with $n = |E|$ vertices and length function and budget consisting of polynomially bounded integers.

Here, $\sfrac 1 \alpha$ is the approximation guarantee of Submodular Orienteering. 
More precisely,
we can take $\sfrac 1 \alpha$ to be the slightly weaker guarantee that the algorithm
for Submodular Orienteering only needs to output a solution
of value at least $f(\spm_{2K}) / \alpha$, where $f(\spm_{2K}) \le f(\spm)$ is the optimum among walks that contain at most 
$2K$ edges.
\end{restatable}

Using submodular orienteering algorithms from \cite{ChekuriPal05} and \cite{rohwedder2026markov} together with \Cref{thm: running time guarantee exact basis}, we can conclude the following two corollaries.
\begin{mycor}\label{thm:quasi-algo}
 Given matroids $\cM_1 = (E,\cI_1)$, $\cM_2 = (E,\cI_2)$, a monotone submodular function $f\colon 2^E \to \RR_+$, and a parameter $K \in \ZZ_+$, there exists a quasi-polynomial time algorithm that outputs a solution $T \in \Ical_1 \cap \Ical_2$ with $|T| = K$ and $f(T) \geq \Omega\lb \apxcommonbase
    \rb \max \{ f(S): S \in \cI_1 \cap \cI_2, |S| = K\}$. 
\end{mycor}

\begin{mycor}\label{thm:orient-coro}
 Given matroids $\cM_1 = (E,\cI_1)$, $\cM_2 = (E,\cI_2)$, a monotone submodular function $f\colon 2^E \to \RR_+$, and a constant parameter $\epsilon \in \RR_{>0}$, there exists an algorithm that runs in time polynomial in $|E|^{O(1/\epsilon)}$ and outputs a solution $T \in \Ical_1 \cap \Ical_2$ with $|T| = K$ and $f(T) \geq  \Omega\lb \frac{1}{K^\epsilon}
    \rb \max \{ f(S): S \in \cI_1 \cap \cI_2, |S| = K\}$. 
\end{mycor}

Interestingly, even for the special case of monotone submodular bipartite perfect matchings, \cite{halabiFairnessSubmodularMaximization2024} exhibited a lower bound of $\Omega(\sqrt{|E|})$ on the integrality gap of the naive LP relaxation with the multilinear extension, which is a standard tool for submodular maximization.
This led \cite{mahabadi2026improved} to ask whether such a rate could be the best possible.
Our result above shows that this is not the case and significantly better approximations are achievable in quasi-polynomial time.

Because we need to guarantee the cardinality constraint exactly for \Cref{thm: running time guarantee exact basis}, we cannot rely on small swaps only to improve a solution, as with \Cref{thm:main-matroid-intersection}.
We address this challenge by using Submodular Orienteering to find strong and potentially large swaps that can be used to improve a current solution in an iterative procedure.
In the special case of perfect bipartite matchings, such swaps correspond to alternating cycles.
However, Submodular Orienteering finds a closed walk that may use elements multiple times, which leads to challenges in terms of finding a good auxiliary objective for Submodular Orienteering guaranteeing that the closed walk contains a strong swappable set that we can find efficiently.
We present an appropriate auxiliary objective, and also a way to extract a good swap from a closed walk found through Submodular Orienteering.

Complementing our positive results, we show the following hardness result via a reduction from Submodular Orienteering in the other direction. 
\begin{restatable}{mythm}{hardness}\label{thm:hardness}
	There is no quasi-polynomial time $\omega(\log\log n / \log n)$-approximation algorithm for maximizing a monotone submodular function over bipartite perfect matchings (and, in particular, over common bases of two matroids)
	assuming $\mathrm{NP}\not\subseteq \bigcap_{\epsilon > 0}\mathrm{ZPTIME}(2^{n^\epsilon})$ and the projection games conjecture holds.
\end{restatable}

$\bigcap_{\epsilon > 0}\mathrm{ZPTIME}(2^{n^\epsilon})$ is the class of problems that can be solved in expected subexponential time by a Las Vegas algorithm, that is, a zero-error randomized algorithm. For many problems in $\mathrm{NP}$, such as the satisfiability problem (SAT), only exponential time algorithms are known and $\mathrm{NP}\not\subseteq \bigcap_{\epsilon > 0}\mathrm{ZPTIME}(2^{n^\epsilon})$ is the common hypothesis that they have no subexponential time algorithms. Note that this hypothesis is weaker than $\mathrm{P}\neq\mathrm{NP}$.

The projection games conjecture is the following well-known hypothesis. It states that there exists a constant $c > 0$ such that for every $\epsilon > \sfrac{1}{n^c}$, a satisfiability instance of size $n$ can be efficiently reduced to a projection game of size $n^{1 + o(1)}\mathrm{poly}(1/\epsilon)$ with an alphabet of size $\mathrm{poly}(1/\epsilon)$ and soundness error $\epsilon$. For more details on projection games see \cite{moshkovitz2012projection}.

These complexity assumptions come from an existing lower bound for Group Steiner Tree which we are using in our hardness results.

\paragraph{Applications to fair submodular maximization:} 
We also derive several results on Fair Matroid Monotone Submodular Maximization (closely related to \cite{el2020fairness}, \cite{halabiFairnessSubmodularMaximization2024}, \cite{mahabadi2026improved}). In this framework, a submodular function is maximized over a single matroid constraint, with additional fairness quotas in the form of cardinality upper and lower bounds over subsets of the elements. Using our techniques, we are also able to derive the following theorems in this setting, where we use the shorthand $[r]\coloneqq \{1,\dots, r\}$.

\begin{restatable}{mythm}{corlowerbounds}\label{coro:matroid-constant-lowerbounds}
For any constant $\epsilon>0$, there exists a polynomial-time algorithm that, given a matroid $\cM = (E,\cI)$, a partition $E = E_1 \sqcup \cdots \sqcup E_r$ for any constant $r$, upper/lower bounds $k_i, \ell_i \in \mathbb{Z}_{\geq 0}$ for $i\in [r]$, and a monotone submodular function $f\colon 2^E \to \RR_+$, outputs a solution $T \in \Ical$ such that  $(1-\epsilon) \ell_i \leq |T \cap E_i| \leq k_i \ \forall i\in [r]$, and moreover $f(T) \geq \left(\frac{1}{2} - O(\epsilon) \right)  \max \{ f(S): S \in \cI, \ell_i \leq |S \cap E_i| \leq k_i \ \forall i \in [r]\}$ (assuming that the last-mentioned set of solutions is non-empty). 
\end{restatable}

\begin{restatable}{mythm}{exactfairness}\label{coro:exact-lower-bounds}
    There exists a quasi-polynomial time algorithm that, given a matroid $\cM = (E,\cI)$, a partition $E = E_1 \sqcup \cdots \sqcup E_r$ for any $r$, upper/lower bounds $k_i, \ell_i$ for $i\in [r]$, and a monotone submodular function $f\colon 2^E \to \RR_+$, outputs a solution $T \in \Ical$ such that  $\ell_i \leq |T \cap E_i| \leq k_i \ \forall i\in [r]$, and moreover $f(T) \geq \Omega\left(\frac{1}{\log |E|} \right)  \max \{ f(S): S \in \cI, \ell_i \leq |S \cap E_i| \leq k_i \ \forall i \in [r]\}$ (assuming that the last-mentioned set of solutions is non-empty). 
\end{restatable}

The two results are complementary: In the first one, $r$ is required to be
a constant, whereas in the second one it can be arbitrary. Furthermore,
the first result allows a violation of the lower bound, whereas the second one
does not. On the other hand, the approximation guarantee of the first
result is stronger than that of the second.

\Cref{coro:matroid-constant-lowerbounds} is proven by a modification to the bicriteria local search algorithm in \Cref{sec: corollary 2}. \Cref{thm:main-matroid-intersection} cannot be used as a black box to obtain \Cref{coro:matroid-constant-lowerbounds} because problem-specific local swaps must be considered by the algorithm to obtain the lower bounds. \Cref{coro:exact-lower-bounds} is a consequence of \Cref{thm: running time guarantee exact basis} as this problem can be reduced to finding an exact cardinality common independent set. The equivalence is shown in \Cref{sec: exact fairness}. This equivalence was observed via a different reduction in \cite{halabiFairnessSubmodularMaximization2024}, and we include ours as it provides a different and arguably simpler framework.

\section{Preliminaries}
In this work we consider maximization of a submodular function $f$ over a discrete ground set $E$. We will use $A \Delta B\coloneqq (A\setminus B)\cup (B\setminus A)$ to denote the symmetric difference of sets $A, B \subseteq E$. We denote by $[i]$ the set $\{1, 2, \hdots, i\}$ for any $i \in \mathbb{Z}_{\geq 0}$. For a set $A$ and element $i$, we use $A-i$ to denote the set $A \setminus \{i\}$ and $A + i$ to denote $A \cup \{i\}$. 

\paragraph{Submodular functions:} A submodular function $f$ over $E$ is given by $f\colon 2^{E} \rightarrow \mathbb{R}_{\geq 0}$ such that $f(A) + f(B) \geq f(A \cup B) + f(A \cap B)$ for all $A, B \subseteq E$. Moreover, $f$ is monotone if $f(A) \leq f(B)$ for all $A \subseteq B \subseteq E$. Throughout this work, we assume $f$ is normalized such that $f(\emptyset) = 0$.

\paragraph{Residual function:} Given any set $E' \subseteq E$ and submodular $f$, the residual function $f_{E'}$ is defined by $f_{E'}(A) = f(A \cup E') - f(E')$ for every $A \subseteq E$. A standard analysis guarantees that $f_{E'}$ is itself monotone submodular for any monotone submodular $f$ and $E' \subseteq E$. We will also use the notation $f(A \mid E') = f_{E'}(A)$ for any $A, E' \subseteq E$. 
\medskip

\noindent
The following properties of monotone submodular functions are well known and can be found for example in \cite{lee2010submodular}, \cite{madiman2010information}.
\begin{myfact}
    Let $A \subseteq E$ and $A = A_1 \sqcup A_2 \sqcup \hdots \sqcup A_k$ be a partition of $A$ into disjoint subsets. Then $$\displaystyle\sum_{i =1}^k f(A \setminus A_i) \geq (k-1) f(A).$$
\end{myfact}

\begin{myfact}\label{fact: covering}
    Let $\{A_1, \hdots, A_k\}$ be a collection of subsets of some $A \subseteq E$ such that every $a \in A$ is contained in exactly $q$ of the sets $A_i$. Let $B \subseteq E$ be disjoint from $A$. Then $$\displaystyle\sum_{i =1}^k (f(A_i \cup B) - f(B)) \geq q \cdot \lb f(A \cup B) - f(B) \rb.$$
\end{myfact}

\begin{myfact}\label{fact: covering2}
    Let $\{A_1, \hdots, A_k\}$ be a collection of subsets of some $A \subseteq E$ such that every $a \in A$ is contained in exactly $q$ of the sets $A_i$. Then 
    \begin{align*}
        \displaystyle\sum_{i \in k}  (f(A) - f(A \setminus A_i)) \leq q \cdot (f(A) - f(\emptyset)).
    \end{align*}
    \end{myfact}

\begin{myfact}\label{lem: frac additivity}
    Monotone submodular functions are \textit{fractionally subadditive}. In particular for any $A \subseteq E$, $f(A) \leq \sum_{i \in [k]}\beta_i f(B_i)$ for any choice of subsets $B_i \subseteq E$ and values $\beta_i \geq 0$ such that $\sum_{i \in [k]: a \in B_i} \beta_i \geq 1$ for every $a \in A$.
\end{myfact}

\paragraph{Matroids:} 
A matroid is a set family over the ground set $E$ consisting of \textit{independent sets} $\Ical \subseteq 2^E$ that satisfies 
\begin{enumerate}
    \item \textit{Downward closedness: } if $A \in \Ical$ and $B\subseteq A$ then $B \in \Ical$.
    \item \textit{Augmentation: } if $A, B \in \Ical$ and $|A| < |B|$ then there exists $e \in B \setminus A$ such that $A + e \in \Ical$.
\end{enumerate}

\paragraph{Matroid Intersection: } For two matroids $\cM_1 = (E,\cI_1), \cM_2 = (E,\cI_2)$ over a common ground set $E$, their intersection defines as feasible sets, or common independent sets, the elements in $\Ical_1 \cap \Ical_2 = \{A\subseteq E: A\in \Ical_1, A\in \Ical_2\}$. Common independent sets typically do not form the independent sets of a matroid.

A common tool for optimizing over a matroid intersection is via using the matroid intersection digraph.
For a feasible set $A \in \Ical_1 \cap \Ical_2$, we consider a bipartite directed graph $\mathcal{D}(A)$ with nodes $A \sqcup (E \setminus A)$ and directed edges as follows:
\begin{itemize}
\item For each $i \in A, j \in E \setminus A$ with $A \cup \{j\} \setminus \{i\} \in \Ical_1$, we have an arc $(i, j)$ in $\Dcal(A)$.
\item For each $i \in A, j \in E \setminus A$ with $A \cup \{j\} \setminus \{i\} \in \Ical_2$, we have an arc $(j, i)$ in $\Dcal(A)$. 
\end{itemize}
Thus left-to-right arcs encode valid swaps on $A$ with respect to $\Ical_1$ and right-to-left arcs encode valid swaps on $A$ with respect to $\Ical_2$. 
The graph contains directed paths and cycles, some of which form feasible matroid intersection exchanges. 
A dipath or dicycle $P$ is \textit{feasible} for $\Dcal(A)$ if $A \Delta V(P) \in \Ical_1 \cap \Ical_2$ and any subpath $P' \subset P$ that begins and ends at an endpoint of $P$ or a vertex in $A$ also satisfies $A \Delta V(P') \in \Ical_1 \cap \Ical_2$.

The following lemma from \cite{lee2010submodular} states that we can subdivide $\Dcal(A)$ into a collection of alternating paths and cycles that form feasible exchanges in both matroids.


\begin{mylemma}\protect{\cite[Lemma~2.5 and its proof]{lee2010submodular}}\label{lem: path cycle decomposition}
Let $A,B \in \cI_1 \cap \cI_2$ and $\Dcal(A)$ defined as above. Then there exists an $s \in \mathbb{Z}_{\geq 0}$ and a collection of feasible dipaths/dicycles $\{P_1, \hdots, P_m\}$ (possibly with repetition) in $\Dcal(A)$, using only vertices in $A \Delta B$, so that each element of $A\Delta B$ appears in exactly $2^s$ dipaths/dicycles $P_i$. Moreover, if $|B| \geq |A|$ then every dipath/dicycle $P_i$ is either a dicycle or a dipath with both endpoints in $B$. 
\end{mylemma}

Although the final property on the dipath endpoints is not explicitly stated in \cite[Lemma~2.5]{lee2010submodular}, it follows quite easily from the proof. The proof in \cite{lee2010submodular} proceeds by adding ``dummy elements'' to the smaller of the two feasible sets ($A$ in our lemma statement), and then finding a dicycle decomposition of the augmented sets. Removal of the ``dummy elements'' results in a collection of dipaths and dicycles. A dipath obtained this way cannot terminate in $A$ since that would mean we removed its neighbor in the dicycle, which would be in $B$ and therefore not a ``dummy element''.  

Using the construction in \Cref{lem: path cycle decomposition}, we may obtain the following related lemma over partition matroids specifically. 
\begin{mylemma}\label{lem: part path cycle decomposition}
    Let $A,B \in \cI_1 \cap \cI_2$ and $\Dcal(A)$ defined as above and assume $\Ical_2 = \{E' \subseteq E: |E' \cap E_j| \leq k_j \text{ for }j \in [r]\}$ is a partition matroid and that $B$ is a basis of $\Ical_2$. Then there exists a collection of dipaths/dicycles $\{P_1, \hdots, P_m\}$ satisfying the conditions of \Cref{lem: path cycle decomposition} and every right-to-left arc $(b, a)$ contained in the collection $\{P_1,\hdots, P_m\}$ corresponds to elements $a, b \in E_j$ for some part $j \in [r]$.
\end{mylemma}

\begin{proof}[Proof of \Cref{lem: part path cycle decomposition}]
    Since $|B| \geq |A|$, we extend the matroids by adding $|B| - |A|$-many ``dummy elements'' independent with everything else in the first matroid. For every part $j \in [r]$, we let $|B \cap E_j| - |A \cap E_j|$ many of the ``dummy elements'' be associated to part $j$ in the second matroid. We add all of these elements into $A$ to obtain $\tilde{A}$ which is independent in the extended matroids. Notice that after the insertion of the dummy elements, all parts are saturated by both $B$ and $\tilde{A}$, since $B$ was a basis of $\Ical_2$. Then every right-to-left arc of $\Dcal(\tilde{A})$ must have both endpoints in the same part $E_j$ for some $j \in [r]$. Now, we may use the construction of \Cref{lem: path cycle decomposition} to obtain a covering $\{P_1, \hdots, P_m\}$ of $\tilde{A} \Delta B$. Removal of the dummy elements preserves feasibility of the exchanges and as all remaining right-to-left arcs in any $P_i$ stem from right-to-left arcs in $\Dcal(\tilde{A}) \cap \Dcal(A)$, we can conclude. 
\end{proof}

Additionally, the following well-known lemma decomposes $\Dcal(A)$ into \textit{disjoint} alternating cycles, which may not necessarily be feasible. This is a direct consequence of a bijective exchange property of matroids by Brualdi \cite{brualdi1969comments}.

\begin{mylemma}\label{lem: disj cycle decomposition}
Let $A,B \in \cI_1 \cap \cI_2$ be common bases and $\Dcal(A)$ defined as above. Then there exists vertex-disjoint cycles $C_1, \hdots, C_k$ for some $k \in \mathbb{Z}_{\geq 1}$ in $\Dcal(A)$ such that 
$B = A \Delta \lb \displaystyle\bigcup_{i =1}^k V(C_i) \rb$.
\end{mylemma}

We will use the following algorithmic results which output a solution for Submodular Orienteering via a quasi-polynomial time algorithm shown in \cite{ChekuriPal05} and a polynomial time method for constant choice of $\epsilon$ shown in \cite{rohwedder2026markov}.
\begin{mythm}[\cite{ChekuriPal05}]\label{lem: orienteering-output}
    Consider a Submodular Orienteering problem over directed graph $G = (V, A,l)$ with budget $L$, start node $s$ and terminal node $t$ in $V$ (not necessarily distinct), and monotone submodular function $f$ over $2^V$.  Let $f(\spm)$ be the maximum submodular value of any $s$-$t$ walk of length at most $L$. There exists a quasi-polynomial time algorithm $\algorient$ with running time $(|V| \log L)^{O(\log |V|)} $ that outputs a walk $W$ of length at most $L$ such that $f(V(W)) \geq f(\opt) \cdot \Omega\lb \frac{1}{\log |V|}\rb$. Moreover, we can find in time $(|V|\log L)^{O(\log q)}$ a solution of value at least $\Omega\lb \frac{1}{\log q}\rb \cdot f(\spm_q)$, where $\opt_q$ is an optimal walk among solution walks with at most $q$ edges. 
\end{mythm}

\begin{mythm}[\cite{rohwedder2026markov}, Theorem 15]\label{lem: orienteering-output-new}
    Consider a Submodular Orienteering problem over directed graph $G = (V, A,l)$ with budget $L$, start node $s$ and terminal node $t$ in $V$ (not necessarily distinct), and monotone submodular function $f$ over $2^V$.  Let $f(\spm)$ be the maximum submodular value of any $s$-$t$ walk of length at most $L$. For any $\epsilon > 0$, there exists an algorithm $\algorient(\epsilon)$ with running time $|V|^{O(1/\epsilon)} $ that outputs a walk $W$ of length at most $L$ such that $f(V(W)) \geq f(\opt) \cdot \Omega\lb \frac{1}{|V|^\epsilon}\rb$. Moreover, we can find in time $|V|^{O(1/\epsilon)}$ a solution of value at least $\Omega\lb \frac{1}{q^\epsilon}\rb \cdot f(\opt_q)$, where $\opt_q$ is an optimal walk among solution walks with at most $q$ edges.
\end{mythm}

While \Cref{lem: orienteering-output-new} is not explicitly stated this way in \cite{rohwedder2026markov}, it follows by observing that the instance can be reduced to an $O(q)$-layered instance in
this case and then
Lemma 14 in \cite{rohwedder2026markov} gives the claimed guarantee with $d=q^{\Theta(\epsilon)}$ and $r =\Theta(1/\epsilon)$.

\section{Bicriteria approximation}

In this section we prove the bicriteria result Theorem~\ref{thm:main-matroid-intersection}, restated here.
\bicriteriaguarantee*

We will let
    $ \spm \in \argmax \{f(T) : T \in \Ical_1 \cap \Ical_2, \card{T} = K\}$
be an optimal solution to this problem.

We assume that this set of solutions is nonempty; otherwise our problem is vacuous.
In fact, we can also assume that $K = \max \{|T|: T \in \cI_1 \cap \cI_2 \}$. This is without loss of generality, because given $K$, we can truncate both matroids to cardinality $K$ and this yields an instance where $K$ is the maximum possible cardinality of a common independent set.

To prove the main result, Theorem~\ref{thm:main-matroid-intersection}, we use a local search method which makes two kinds of local swaps. This is formalized in the next section.

\subsection{A local search based algorithm}

We define the following algorithm, which starts with an arbitrary common independent set of $\Ical_1 \cap \Ical_2$ of size $K$, and runs local search by performing cycle swaps that are improving for $f$ or path swaps that decrease cardinality by $1$ but increase $f$ by a large additive factor. We define this as a bicriteria improving swap. Throughout the method, we assume $\epsilon$ is chosen such that $\epsilon K \in \mathbb{N}$.

\begin{mydef}
    Let $T$ be a set in $\Ical_1 \cap \Ical_2$. We say that a pair $(S', S'')$ is a \textbf{bicriteria improving swap} if $S' \subseteq T, S'' \subseteq E\setminus T$ and $(T\setminus S') \cup S'' \in \Ical_1 \cap \Ical_2$, and either
    \begin{enumerate}
        \item $|S'| \leq |S''|$ and $f((T \setminus S') \cup S'') > f(T)$ or 
        \item $|S'| = |S''| + 1$ and $f((T \setminus S') \cup S'') > f(T) + \frac{f(\spm)}{K\epsilon}$. 
    \end{enumerate}
\end{mydef}

\begin{algorithm}[H]
\DontPrintSemicolon
\SetKw{KwBy}{by}

Let $T$ be a set in $\Ical_1 \cap \Ical_2$ such that $|T| = K$.\;

\While{there exists a pair $(S', S'')$ that is a bicriteria improving swap and $|S''| \leq \frac{1}{\lambda}$} {$ T\gets (T \setminus S') \cup S''$.\;
\If{$|T| = (1-\epsilon) K$}{\Return $T$}}

\Return $T$
\caption{\textsc{Bicriteria Local Search}($(E, \Ical_1, \Ical_2), \lambda$)}
\label{alg:aux LS matching}
\end{algorithm}

\begin{remark}
   Clearly, a bicriteria improving swap can only be found if we know the value $f(\spm)$, but the value can be estimated up to a small multiplicative constant:
   We identify all elements that are not contained in any matroid intersection base. Then
   we run our method for guesses of $f(\spm) \in [M, K \cdot M]$ where $M \coloneqq \max\{f(e) : e \text{ is contained in a common independent set of size $K$} \}$ at the expense of an extra $O(\log K)$ in runtime. 
\end{remark}

\begin{remark}
    \Cref{alg:aux LS matching} can be made into a polynomial-time algorithm for any constant $ \lambda$ via standard local search modification which only makes swaps that increase the value by at least a $\delta/\mathrm{poly}(|E|)$ factor for any choice of constant $\delta>0$. The algorithm then runs in polynomial time and only loses a $(1+\delta)$ factor in the analysis (see e.g. \cite{lee2010submodular}).
\end{remark}

We now analyze the approximation guarantees of \Cref{alg:aux LS matching} with parameter $\lambda = \frac{\epsilon^2}{2}$. Our main technical lemma is the following. 

\begin{mylemma}\label{lem: local opt guarantees matroidal}
    For $\epsilon \in (0,\frac12)$, let $T$ be an output of \Cref{alg:aux LS matching} with parameter $\lambda = \frac{\epsilon^2}{2}$. Then $\card{T} \geq (1-\epsilon) K$ and $f(T) \geq (\frac{1}{2} - 2 \epsilon) f(\spm)$. 
\end{mylemma}

Note that this lemma with choice of $\epsilon' = \epsilon/2$ implies Theorem~\ref{thm:main-matroid-intersection}.

To prove the lemma, we will construct a set of local matroid intersection exchanges between $T$ and $\spm$. 
Assuming that the algorithm terminated, we can use the fact that either no local bicriteria swap is improving or that we have reached a solution of size $(1-\epsilon)K$.

\subsubsection{Proof of \Cref{lem: local opt guarantees matroidal}}
\label{sec:lemma2-proof}


Recall that $\epsilon$ has been chosen such that $\epsilon K \in \mathbb{N}$. Note that swaps never decrease the submodular value. 
If \Cref{alg:aux LS matching} terminates with $|T| = (1-\epsilon) K$, then the algorithm must have made at least $\epsilon\cdot K$ many swaps $(E', E'')$ that decrease the cardinality by $1$. Such swaps must increase the value $f((T \setminus E') \cup E'') > f(T) + \frac{f(\spm)}{K\epsilon}$, so that after $\epsilon\cdot K$ many such swaps $f(T)$ is at least $f(\spm)$ and the lemma follows. \\

The rest of the proof is dedicated to termination at $|T| > (1-\epsilon)K$ such that there do not exist any bicriteria improving swaps over $T$.

Consider $T$ found by the local search algorithm, and $\spm$ defined as above. We construct the digraph for exchanges within $\spm \Delta T$; more precisely, between $\spm \setminus T$ and $T\setminus \spm$. 
We consider the bipartite digraph $\Dcal(T)$ and a collection of dipaths/dicycles $\{Q_1, \hdots, Q_m\}$ feasible in $\Dcal(T)$ and covering every element of $T \Delta \spm$ exactly $2^s$ times for some $s \in \mathbb{Z}_{\geq 0}$, as guaranteed by \Cref{lem: path cycle decomposition}.
To construct local exchanges from this collection of dipaths/dicycles, we will need to break the paths and cycles into $2/\epsilon^2$-sized exchanges. We follow the procedure introduced in \cite{lee2010submodular}. For every feasible dipath/dicycle $Q$ in $\{Q_1, \hdots, Q_m\}$, label the vertices of $\spm \cap V(Q)$ in consecutive path order from $1$ to $|\spm \cap V(Q)|$. Then create $1/\epsilon^2$ many copies of $Q$ and, in the $i^{th}$ copy, delete all vertices whose label is congruent to $i \bmod 1/\epsilon^2$. Notice that each vertex in $Q$ of $\spm \setminus T$ is deleted in exactly 1 copy corresponding to its unique index. Altogether, this results in a collection of paths $\mathcal{C} = \{P_1, \hdots, P_t\}$ where each $P_i$ is a maximal sub-dipath/dicycle of at most $2/ \epsilon^2$ vertices of $\spm$ left intact after the deletion.

Moreover, each path $P_j$ starts and ends either at an original endpoint of $Q_i$ for some $i$ or at a vertex in $T$ (because we delete only nodes in $\spm$). Hence each path $P_j$ forms a valid matroid intersection swap, $T \Delta V(P_j) \in \Ical_1 \cap \Ical_2$. 
We will sum over all of the local swaps given by $\mathcal{C}$ to prove \Cref{lem: local opt guarantees matroidal}.

For any of the paths/cycles $P \in \mathcal{C}$, removing the vertices $V(P) \cap T$ and adding the vertices of $V(P) \cap \spm$ to $T$ maintains a feasible solution. Define $P_T = V(P) \cap T$ and $P_{O} = V(P) \cap \spm$. Since every $P\in\mathcal C$ is alternating between $T\setminus \spm$ and $\spm \setminus T$, we have $|P_T|\le |P_{O}|+1.$ Moreover, the deletion procedure guarantees that $|P_{O}|\leq 2/\epsilon^2.$ Hence the exchange $T \to T \setminus P_T \cup P_{O}$ is among the exchanges considered by the local search algorithm.

Let $\Ccal_1 \subseteq \Ccal$ be the swaps that decrease cardinality by $1$. By local optimality of $T$, for any $P \in \Ccal\setminus \Ccal_1$,
    $f(\pathswapmat) \leq f(T)$ and for any $P \in \Ccal_1$, $f(\pathswapmat) \leq f(T) + \frac{f(\spm)}{\epsilon\cdot K}$. \\

    Summing these inequalities over all $P \in \mathcal{C}$, we get the following condition
    \begin{align}
        \displaystyle\sum_{P \in \mathcal{C}}f(\pathswapmat)  \leq \displaystyle\sum_{P \in \mathcal{C}} f(T) + \displaystyle\sum_{P \in \mathcal{C}_1} \frac{f(\spm)}{\epsilon K} \label{eq: loc opt mat}.
    \end{align}

    We will use \cref{eq: loc opt mat} to prove the approximation guarantees on $T$. Note that every element in $\spm \setminus T$ appears in exactly $2^s \cdot (1/\epsilon^2 -1)$ many paths, as the corresponding node appears in $2^s$ many original paths, each of which is copied $1/\epsilon^2$ times with exactly one copy excluding the vertex. Also, every element of $T \setminus \spm$ appears in exactly $2^s \cdot 1/\epsilon^2$ many paths, as its corresponding node appears in $2^s$ original paths and in every copy. For ease of notation, let $t^\ast \coloneqq 2^s(1/\epsilon^2 -1)$ and $t \coloneqq 2^s \cdot 1/\epsilon^2$. We then rearrange and bound \cref{eq: loc opt mat} as follows:
   \begin{align}
   & \displaystyle\sum_{P \in \mathcal{C}}f(\pathswapmat)  \leq \displaystyle\sum_{P \in \mathcal{C}} f(T) + \displaystyle\sum_{P \in \mathcal{C}_1} \frac{f(\spm)}{\epsilon K} \notag\\
   \implies 
         & \displaystyle\sum_{P \in \mathcal{C}}  f(P_{O} \mid T \setminus P_T)   \leq \displaystyle\sum_{P \in \mathcal{C}}  f(P_T \mid T \setminus P_T) + \displaystyle\sum_{P \in \Ccal_1} \frac{f(\spm)}{\epsilon K}\notag \\
        \implies & \displaystyle\sum_{P \in \mathcal{C}}  f(P_{O} \mid T )   \leq \displaystyle\sum_{P \in \mathcal{C}}  f(P_T \mid T \setminus P_T) + \displaystyle\sum_{P \in \Ccal_1} \frac{f(\spm)}{\epsilon K}, \label{eq: part1 paths}
   \end{align} 
   where the second implication follows by submodularity of $f$. \\
   
  We upper bound the value $\displaystyle\sum_{P \in \mathcal{C}}  f(P_T \mid T \setminus P_T)$ and lower bound the value $\displaystyle\sum_{P \in \mathcal{C}}  f(P_{O} \mid T)$. The collection $\{P_T\}_{P \in \Ccal}$ intersects every element of $T \setminus \spm$ exactly $t$ times. Thus by \Cref{fact: covering2},
  \begin{align}\displaystyle\sum_{P \in \mathcal{C}}  f(P_T \mid T \setminus P_T)  &= \displaystyle\sum_{P \in \mathcal{C}}  (f(T) - f(T \setminus P_T)) \notag\\& \leq \displaystyle\sum_{P \in \Ccal}(f(T) - f(T \setminus P_T)) + \displaystyle\sum_{e \in T \cap \spm} t (f(T) - f(T \setminus e)) \notag\\& \leq t f(T).\label{eq: part2 paths}\end{align}
  The collection $\{P_{O}\}_{P \in {\Ccal}}$ intersects every element of $\spm \setminus T$ exactly $t^\ast$ times. Then by \Cref{fact: covering}, \begin{align}\displaystyle\sum_{P \in \mathcal{C}}  f(P_O \mid T)  = \displaystyle\sum_{P \in {\mathcal{C}}}  (f(T \cup P_O) - f(T)) \geq t^\ast (f(T \cup \spm) - f(T)).\label{eq: part3 paths}\end{align}

  Putting \cref{eq: part1 paths}, \cref{eq: part2 paths}, and \cref{eq: part3 paths} together then gives the relation
           \begin{align*}    & t^\ast(f(T \cup \spm) - f(T))   \leq t f(T) + |\Ccal_1| \frac{f(\spm)}{\epsilon K} \\
           \implies & \left( \frac{t^\ast}{t} - \frac{|\Ccal_1|}{t\epsilon K}\right)f(\spm)  \leq \left(1 + \frac{t^\ast}{t}\right)f(T).
            \end{align*}

And using the fact that $\frac{t^\ast}{t} = 1-\epsilon^2$, we get the relation
\begin{align}
   \left( 1- \epsilon^2 - \frac{|\Ccal_1|}{t\epsilon K}\right)f(\spm)  \leq \left(2 - \epsilon^2 \right)f(T).\label{eq: path-dep-bound}
\end{align}

We will just need to bound the value $\frac{|\Ccal_1|}{t\epsilon K}$ to conclude the final guarantee. Notice that $|T| \leq K = |\spm|$, so that by \Cref{lem: path cycle decomposition} all of the path exchanges in the $2^s$-sized covering $\{Q_1, \hdots, Q_m\}$ of $\Dcal(T)$ are either cycles or have both endpoints of $Q_i$ in $\spm$.

After the deletion of some vertices, we get the resulting subpaths $\Ccal$ of which an exchange in $\Ccal$ only satisfies $|P_T| = |P_O| + 1$ if both endpoints of $P$ are in $T$, since the path is alternating. Then each original neighbor in some $Q_i$ of these endpoints must have been deleted in the construction of $\Ccal$, where the original neighbors are in $\spm$. Assign to each such path $P \in \Ccal_1$ the deleted $\spm$ vertex immediately preceding the path in any fixed orientation of the paths and cycles obeying the path order. This means that every deletion of a vertex in $\spm$ is assigned uniquely to at most $1$ path in $\Ccal_1$. Notice that any element of $\spm \setminus T$ is deleted exactly once per path in $\{Q_1, \hdots, Q_m\}$ that contains the vertex, so exactly $2^s$ times. Thus $|\Ccal_1| \leq 2^s |\spm \setminus T| \leq 2^s \cdot K$. 

Then we get that 
\begin{align*}
    \frac{|\Ccal_1|}{t\epsilon K} & \leq \frac{2^s}{t \epsilon } =\epsilon.
\end{align*}

Plugging this bound into \cref{eq: path-dep-bound} shows that $f(T) \geq \lb\frac{1 - \epsilon^2 - \epsilon}{2 - \epsilon^2}\rb f(\spm) \geq \lb \frac{1}{2} - 2\epsilon \rb f(\spm)$ for sufficiently small $\epsilon$ and together with the bound on $\card{T} > (1-\epsilon) K$ this concludes the proof of the lemma.

\subsection{Modifying the local search to obtain \Cref{coro:matroid-constant-lowerbounds}}\label{sec: corollary 2}

We now modify the swaps that are considered in \Cref{alg:aux LS matching} slightly to derive \Cref{coro:matroid-constant-lowerbounds}, which, for convenience, is restated below.

\corlowerbounds*

Let $\sopt$ be an optimal solution to this problem, and we define $k'_i\coloneqq |\sopt \cap E_i |$ for $i\in [r]$. Hence, $k'_i \in \{\ell_i,\ldots,k_i\}$ for each $i\in [r]$. Assume that we can enumerate over all choices of $k'_i \in \{\ell_i,\ldots,k_i\}$.
If $r$ is constant, this can be done in $|E|^r$ many guesses. 
 For a given choice of $k'_i$, we define a partition matroid $\cI_2 \coloneqq \{ T: |T \cap E_i| \leq k'_i \ \forall i \in [r]\}$. Notice that $\sopt$ is a basis of this matroid and all bases have size $K \coloneqq \sum_{i \in [r]} k'_i$. We would then like to run the local search algorithm \Cref{alg:aux LS matching} for the intersection of the matroid constraint with this partition matroid constraint while staying close to a common basis. 
 
When multiple part lower bounds need to be maintained, we need to modify slightly our criteria for an improving swap.

\begin{mydef}
    Let $T$ be a set in $\Ical_1 \cap \Ical_2$. We say that a pair $(S', S'')$ is a \textbf{bicriteria part improving swap} if $S' \subseteq T, S'' \subseteq E\setminus T$ and $(T\setminus S') \cup S''$ is in $\Ical_1 \cap \Ical_2$, and either
    \begin{enumerate}
        \item $|S' \cap E_i| \leq |S'' \cap E_i|$ for all $i \in [r]$ and $f((T \setminus S') \cup S'') > f(T)$, or 
        \item $|S' \cap E_j| = |S'' \cap E_j| + 1$ for some $j \in [r]$, $|S' \cap E_i| \leq |S'' \cap E_i|$ for all $i\in [r]\setminus \{j\}$, and $f((T \setminus S') \cup S'') > f(T) + \frac{f(\sopt)}{k'_j \epsilon}$. 
    \end{enumerate}
\end{mydef}

Thus, an improving swap can only decrease the intersection with one part, and if it does, the submodular value must increase proportionally to this lower bound. Here we again assume $\epsilon$ is chosen such that $\epsilon k'_j \in \mathbb{N}$ for each $j \in [r]$.

We can apply \Cref{alg:aux LS matching} starting from a common basis of the two matroids with parameter $\lambda = \frac{\epsilon^2}{2}$ over bicriteria part improving swaps. We terminate the algorithm if any part intersection reaches $|T \cap E_i| = (1- \epsilon) k'_i$ and notice that this must mean that $f(T) \geq f(\sopt)$. Otherwise, we get a locally optimal output $T$ for bicriteria part improving swaps. \\ 

Applying the proof of \Cref{lem: local opt guarantees matroidal}, we get a collection of exchanges $\Ccal$ between $T$ and $\sopt$ formed by covering and deletion of some vertices of $\sopt$. We claim that these swaps are indeed feasible for bicriteria part improving swaps as well. This is formalized by the following claim.

\begin{claim}\label{claim: part-swaps}
    Let $\Ccal$ be the path exchanges between $T$ and $\sopt$ constructed in \Cref{sec:lemma2-proof}. For any $P \in \Ccal$, there exists at most one part $j \in [r]$ such that $|P_T \cap E_j| = |P_\sopt \cap E_j| + 1$ and for all other parts $i\in [r]\setminus \{j\}$ we have $|P_T \cap E_i| \leq |P_\sopt \cap E_i|$. 
\end{claim}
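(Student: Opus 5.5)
The plan is to build the covering from \Cref{lem: part path cycle decomposition} instead of \Cref{lem: path cycle decomposition}, and then read off part counts arc by arc. In the present setting $\Ical_2$ is the partition matroid with capacities $k'_i$, and $\sopt$ is a basis of it. We also have $|T| \le K = |\sopt|$. So \Cref{lem: part path cycle decomposition} applies with $A = T$ and $B = \sopt$. It gives a covering $\{Q_1,\dots,Q_m\}$ of $T \Delta \sopt$ by feasible dipaths and dicycles with three properties: every element is covered exactly $2^s$ times; every path has both endpoints in $\sopt$; and every right-to-left arc $(b,a)$, with $b \in \sopt\setminus T$ and $a \in T\setminus\sopt$, joins two elements of the same part. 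The deletion procedure of \Cref{sec:lemma2-proof} is then run on this covering. Every $P \in \Ccal$ is a contiguous sub-dipath or sub-dicycle of some $Q_i$, so every right-to-left arc inside $P$ is an arc of $Q_i$ and still joins elements of a common part.

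Next I would pair up elements along $P$. Orient $P$ along its path order; its vertices alternate between $\sopt$ and $T$. Match each vertex $a \in P_T$ with its predecessor on $P$, whenever that predecessor exists. The predecessor is then some $b \in P_\sopt$, and $(b,a)$ is a right-to-left arc, so $a$ and $b$ lie in the same part. Distinct vertices of $P_T$ have distinct predecessors, so the matching is injective.

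The matching fails only for a $T$-vertex with no predecessor inside $P$. If $P$ is a dicycle, every vertex has a predecessor, so $|P_T\cap E_i| \le |P_\sopt \cap E_i|$ for all $i$. If $P$ is a dipath, the only vertex without a predecessor is its first vertex $v$. If $v\in\sopt$, every $T$-vertex is matched and again $|P_T\cap E_i|\le|P_\sopt\cap E_i|$ for all $i$. If $v \in T$, all of $P_T$ except $v$ is injectively matched within parts. Letting $j$ be the part containing $v$, this gives $|P_T\cap E_j| \le |P_\sopt\cap E_j|+1$ and $|P_T\cap E_i|\le|P_\sopt\cap E_i|$ for every $i\ne j$. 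In particular, at most one part can satisfy $|P_T\cap E_j| = |P_\sopt\cap E_j|+1$, which is exactly the claim.

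The main obstacle is making sure the arc structure really survives the passage to $\Ccal$. First, the paths of $\Ccal$ must be built from the covering of \Cref{lem: part path cycle decomposition}, not an arbitrary one from \Cref{lem: path cycle decomposition}. Second, deleting $\sopt$-vertices must create no new arcs, only contiguous pieces, so each right-to-left arc of $P$ is an original arc of $Q_i$. Third, the direction convention matters: arcs $(j,i)$ with $j\notin T$ are the $\Ical_2$ arcs, so the predecessor of a $T$-vertex is its same-part partner. Once these points are checked, the rest is the counting above.
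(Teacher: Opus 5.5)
Your proposal is correct and follows essentially the same route as the paper: build $\mathcal{C}$ from the part-respecting covering of \Cref{lem: part path cycle decomposition}, then note that in each piece every $T$-vertex except possibly the first is entered by a right-to-left arc from a vertex of the optimum lying in the same part. Your predecessor matching is a cleaner, explicit version of the paper's count of right-to-left arcs against $T$-vertices. It also makes clear that the endpoint property of the $Q_i$ is not needed for this claim itself; it only matters for the later bound on $|\mathcal{C}_j|$.
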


\begin{proof}
    
    Since $\sopt$ is a basis of $\Ical_1 \cap \Ical_2$ (after truncation), every path in the covering $\{Q_1, \hdots, Q_m\}$ is either an irreducible cycle or a path with both endpoints in $\sopt$. Moreover, by \Cref{lem: part path cycle decomposition}, all right-to-left arcs in any $Q_i$ correspond to elements of $\sopt$ and $T$ that are in the same part $E_i$ for some $i\in [r]$.
    
    Thus for every path $Q_i$ there are at least as many right-to-left arcs as elements in $V(Q_i) \cap T$ so that $|V(Q_i) \cap \sopt \cap E_j| \geq | V(Q_i) \cap T \cap E_j|$ for each part $j\in [r]$. Then for each $P \in \Ccal$, $|P_\sopt \cap E_j| < |P_T \cap E_j|$ can only happen by deletion of the neighbors in $\sopt$. Notice that every deletion of a vertex in $\sopt$ corresponds to the deletion of at most one unique right to left arc, and therefore a unique path $P \in \Ccal$ with at most one fewer right to left arcs compared to the number of nodes $|P_T|$. This corresponds to $|P_T \cap E_j| = |P_\sopt \cap E_j| + 1$ for at most one part $j \in [r]$ so the claim follows. 
\end{proof}

We now let $\Ccal_j$ correspond to the paths in $\Ccal$ such that $|P_T \cap E_j| = |P_\sopt \cap E_j| + 1$ and $j\in [r]$ is the unique such index. Following the proof of \Cref{lem: local opt guarantees matroidal} and summing over all paths in $\Ccal$, we get that 
\begin{align}
    t^\ast ((f(T \cup \sopt) - f(T)) \leq tf(T) + \displaystyle\sum_{j=1}^r |\Ccal_j| \frac{f(\sopt)}{\epsilon k'_j} \label{eq: cor 2 bound}.
\end{align}

Then, notice that any path $P \in \Ccal_j$ corresponds to deletion of a vertex in $\sopt \cap E_j$. There are exactly $2^s k'_j$ many such vertices, so $|\Ccal_j| \leq 2^s k'_j$ for each $j \in [r]$. Plugging this into \Cref{eq: cor 2 bound} gives 
\begin{align*}
    &  t^\ast ((f(T \cup \sopt) - f(T)) \leq tf(T) + \frac{2^s f(\sopt)}{\epsilon}r \\
    \implies & \lb \frac{t^\ast}{t} - \frac{2^s r}{\epsilon t}\rb f(\sopt) \leq \lb 1 + \frac{t^\ast}{t}\rb f(T) \\
    \implies &\frac{1 - \epsilon^2 - \epsilon r}{2 - \epsilon^2} f(\sopt) \leq f(T),
\end{align*}
which, if $r$ is constant and for small enough $\epsilon$, gives the desired guarantee. 

\section{Approximation for Submodular Matroid Intersection Basis}\label{sec: submod common basis}

In this section, we prove \Cref{thm: running time guarantee exact basis} and Corollaries \ref{thm:quasi-algo} and \ref{thm:orient-coro}, which state that we can find an exact cardinality common independent set that approximately maximizes a monotone submodular function by a reduction to the Submodular Orienteering problem.

As before, we assume that the set of solutions is nonempty and that $K = \max \{|T|: T \in \cI_1 \cap \cI_2 \}$ by truncation of the matroids.
We will show the approximation guarantee via a reduction of the problem to Submodular Orienteering. Recall the following relationship between the two problems stated by \Cref{thm: running time guarantee exact basis}. 

\orienteerreduce*

Note that \Cref{thm: running time guarantee exact basis} together with \Cref{lem: orienteering-output} and \Cref{lem: orienteering-output-new} directly imply \Cref{thm:quasi-algo} and \Cref{thm:orient-coro}.


The high level idea of the algorithm is to start with
a common base $B$ of $\Mcal_1$ and $\Mcal_2$ and repeatedly improve it via
an improving exchange, that is, some $R\subseteq B, A\subseteq E\setminus B$ such that $(B\setminus R)\cup A$ is again a common base and $f((B\setminus R)\cup A) > f(B)$, until
a good approximation is reached. 
An exchange forms a set of simple cycles in $\Dcal(B)$, see \Cref{lem: disj cycle decomposition}, which each correspond to alternatingly adding and removing elements. Since it is improving, we gain more value from adding elements than we lose from removing elements.
Intuitively, we want to construct an instance of Submodular Orienteering on the graph $\Dcal(B)$ that
models the task of finding such an improving exchange.
To model the gain from adding elements, we use $f_B$ as the objective function in
the orienteering instance; to model the loss from removing elements, we 
use a carefully chosen length function and a budget that we will specify later. The key idea is that the loss is modeled as an additive function which helps simplify the algorithm and analysis. 
We search via the orienteering problem for a walk starting and ending
at one of the nodes $b \in B$.
We think of $b$ as an arbitrary vertex in a cycle corresponding to a suitable exchange.
We defer details on how to obtain $b$ to later.

A walk given by the solution to the Submodular Orienteering instance may visit vertices multiple times. Thus, we may have to  decompose it into simple cycles. In bipartite perfect matching, all simple cycles form feasible exchanges, but in the more general
setting of common bases of two matroids this may not hold. Thus, we need to further translate the cycles into feasible exchanges. Our main technical lemma states that if we find a $b$-$b$ walk $W$ for some $b \in B$ with large submodular value compared to the length, then we can derive a feasible exchange that is improving for $B$. 

\subsection{From cycles to feasible exchanges}
%
Recall that cycles need not be feasible matroid exchanges.
To address this challenge, we use a decomposition idea that splits a cycle in $\Dcal(B)$ into a collection of feasible exchanges contained in the cycle. If the cycle was improving for $B$, then one of these exchanges will also be improving for $B$ and feasible. The decomposition idea is similar to related
techniques in~\cite{chekuri2009dependent, chekuri2011multi}. However, these prior works decompose a (collection of) cycles into irreducible cycles which form a fractional covering by feasible exchanges. Here we use an (arguably) simpler decomposition into exchanges which need not be irreducible cycles and which are therefore more directly convenient for our purposes and perhaps in future uses. 
It is formalized by the following lemma.

\begin{mylemma}\label{lem: fractional cycle repair}
    Let $C$ be a simple directed cycle in $\Dcal(B)$. Let $C_B = V(C) \cap B$ and $\cnotb = V(C) \cap (E \setminus B)$. Then there exists common bases $\{B_1, \hdots, B_m\}$ such that for $j \in [m]$, 
    $$B_j = (B \setminus R_j) \cup A_j \text{ for some } R_j \subseteq C_B, A_j \subseteq \cnotb.$$ Moreover, there are coefficients $\lambda_j \geq 0$ for each $j \in [m]$ such that
    $$\displaystyle\sum_{\substack{j\in [m]:\\ b \in R_j}}\lambda_j = 1 \ \forall b \in C_B, \quad \displaystyle\sum_{\substack{j\in [m]:\\ e \in A_j}} \lambda_j = 1 \ \forall e \in \cnotb, \quad \text{and } \displaystyle\sum_{j=1}^m \lambda_j = |C_B| = |\cnotb|,$$
    and these bases and coefficients can be found in polynomial time in $n$. 
\end{mylemma}

\begin{proof}[Proof of \Cref{lem: fractional cycle repair}]
    Given $C$, a simple directed cycle in $\Dcal(B)$, let $C_B = \{b_1, \hdots, b_k\}$ and $\cnotb = \{e_1, \hdots, e_k\}$ such that $b_i \rightarrow e_i$ is a left-to-right arc in $C$ for each $i$.
    Then $B_i \coloneqq (B \setminus \{b_i\}) \cup \{e_i\}$ is a base in $\Mcal_1$ for each $i \in [k]$. The point $\frac{1}{k}\sum_{i =1}^k \mathbbm{1}_{B_i}$ is therefore in the base polytope of $\Mcal_1$.
    Note that this point can be written as
    \begin{equation*}
    \frac{1}{k}\displaystyle\sum_{i =1}^k \mathbbm{1}_{B_i} = \mathbbm{1}_B - \frac{1}{k}\displaystyle\sum_{b \in C_B} \mathbbm{1}_{\{b\}} + \frac{1}{k}\displaystyle\sum_{e \in \cnotb} \mathbbm{1}_{\{e\}}.
    \end{equation*} 
    An identical argument over right-to-left arcs and $k$ bases in $\Mcal_2$ shows that this same point is in the base polytope of $\Mcal_2$. By integrality of the intersection of the 2 matroid base polytopes, this point can be written as a convex combination of \textit{common bases} of $\Mcal_1$ and $\Mcal_2$. In particular, 
    \begin{equation}\label{eq:decomp_into_Bjs}
     \mathbbm{1}_B - \frac{1}{k}\displaystyle\sum_{b \in C_B} \mathbbm{1}_{\{b\}} + \frac{1}{k}\displaystyle\sum_{e \in \cnotb} \mathbbm{1}_{\{e\}} = \displaystyle\sum_{j=1}^m \mu_j \mathbbm{1}_{B_j}
     \end{equation}
    for common bases $B_1, \hdots, B_m$ for some $m \geq 1$ and $\mu_j \geq 0$ such that $\sum_{j=1}^m \mu_j = 1$. The left-hand side of~\eqref{eq:decomp_into_Bjs} forces every $B_j$ to be of the form $(B \setminus R_j) \cup A_j$ for some $R_j \subseteq C_B$ and $A_j \subseteq \cnotb$.
    Moreover, restricting to any coordinate $b \in C_B$ shows that
   \begin{equation*} 
    1 - \frac{1}{k} = \sum_{\substack{j\in [m]:\\ b \notin R_j}} \mu_j,
    \end{equation*}
    and therefore
    \begin{equation*}
    \sum_{\substack{j\in [m]:\\ b \in R_j}} \mu_j = \frac{1}{k} \qquad \forall b \in C_B.
    \end{equation*}
    Analogously, restricting to any coordinate $e \in \cnotb$ shows that
    \begin{equation*}
     \sum_{\substack{j\in [m]:\\ e \in A_j}} \mu_j = \frac{1}{k}.
    \end{equation*}
    Setting $\lambda_j = k \mu_j$ for every $j \in [m]$ concludes the lemma. Polynomial time computation follows by decomposing the given point into a convex combination of bases in the matroid intersection base polytope using weighted matroid intersection algorithms (see, e.g., \cite{schrijver2003combinatorial}). 
\end{proof}

We now show how to find improving cycles in $\Dcal(B)$, which we can then decompose into feasible exchanges using \Cref{lem: fractional cycle repair}. 

\subsection{Finding an improving exchange via Submodular Orienteering}
We need to provide the missing details on the construction of the
instance of Submodular Orienteering and how we translate a solution into an
improving exchange.

First, we specify an initial arc length function $l$ for arcs in $\Dcal(B)$ to model
the loss from removing elements. (We will actually use a slight variation of it later.)
Label the elements of $B$ by an arbitrary order such that $B = \{1, \hdots, |B|\}$. For every $i \in B$, and every arc $(u, i)$, we set $l(u,i) \coloneqq f([i]) - f([i-1])$.
For every $e \in E\setminus B$ and any arc $(u, e)$, set the arc length of $(u, e)$ to $0$. Note that by monotonicity of $f$, all vertex lengths are nonnegative. And arcs directing into vertices of $B$ are the only arcs with positive length. We claim that these lengths are an overestimate for the submodular value lost by removing any subset of $B$. Since the arc length is determined by its endpoint, we define for any vertex set $S \subseteq V$ the length of traversing a cycle through this vertex set by
\begin{equation*}
l(S) \coloneqq \sum_{i \in S\cap B} \left(f([i]) - f([i-1])\right).
\end{equation*}
\begin{claim}\label{claim: modular lower support}
    $f(R \mid (B\setminus R)) \leq l(R) \quad \forall R\subseteq B$.
\end{claim}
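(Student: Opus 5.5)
Write $R = \{r_1 < r_2 < \dots < r_p\} \subseteq B$, using the fixed labeling $B = \{1,\dots,|B|\}$. The plan is to telescope $f(R \mid B\setminus R)$ along this order and compare each term with the matching prefix increment $f([r_j]) - f([r_j - 1])$ by submodularity.

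First, telescope over the elements of $R$ added one at a time on top of $B \setminus R$:
\[
f(R \mid B\setminus R) = \sum_{j=1}^{p} \Bigl( f\bigl((B\setminus R) \cup \{r_1,\dots,r_j\}\bigr) - f\bigl((B\setminus R)\cup\{r_1,\dots,r_{j-1}\}\bigr) \Bigr).
\]
The $j$-th term is the marginal value of $r_j$ with respect to the set $S_j \coloneqq (B\setminus R) \cup \{r_1,\dots,r_{j-1}\}$.

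Next, compare each marginal to the prefix increment. Every element of $[r_j - 1]$ is either in $B \setminus R$ or equals some $r_i$ with $i<j$, since the $r$'s are sorted. Hence $[r_j-1] \subseteq S_j$, and $r_j \notin S_j$. By submodularity (diminishing returns),
\[
f(S_j + r_j) - f(S_j) \leq f([r_j]) - f([r_j-1]).
\]
Summing over $j$ gives $f(R \mid B\setminus R) \le \sum_{i \in R} \bigl(f([i]) - f([i-1])\bigr) = l(R)$, where the last equality uses $R \subseteq B$ and the definition of $l$.

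There is no real obstacle here; the only point needing care is the containment $[r_j - 1] \subseteq S_j$, which is exactly why the telescoping must follow the same order used to define the lengths.
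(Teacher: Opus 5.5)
Your proof is correct. It is a mirror image of the paper's argument: both compare against the prefix marginals $f([i]) - f([i-1])$ of the fixed order, but you telescope the opposite side of $B = R \sqcup (B\setminus R)$.

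The paper works with the complement. For each $i \in B\setminus R$ it uses $f(\{i\} \mid [i-1]\cap(B\setminus R)) \ge f(\{i\}\mid [i-1]) = l(\{i\})$, which is diminishing returns in the form "smaller context gives a larger marginal". Summing gives $l(B\setminus R) \le f(B\setminus R)$, and subtracting this from the identity $l(B) = f(B)$ yields the claim.

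You instead add the elements of $R$ one at a time on top of $B\setminus R$, in increasing order. The containment $[r_j-1]\subseteq S_j$ then lets you apply diminishing returns the other way: larger context gives a smaller marginal.

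What the paper's route buys is a more structural by-product: $l$ is a modular minorant of $f$ on all subsets of $B$ that is tight at $B$, which is what the name ``modular lower support'' refers to. What your route buys is directness: it bounds the marginal $f(R\mid B\setminus R)$ in a single telescoping pass. It also isolates exactly the one fact that forces you to use the same order as in the definition of $l$, namely $[r_j-1]\subseteq S_j$.
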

\begin{proof}
    We have that for any $i \in B \setminus R$, \begin{align*}
        f(\{i\} \mid ([i-1]\cap (B\setminus R))) & \geq f(\{i\} \mid [i-1]) \ge l(\{i\}) .
    \end{align*} 
    Thus, $$l(B\setminus R) = \displaystyle\sum_{i \in B\setminus R}l(\{i\}) \leq \displaystyle\sum_{i \in B\setminus R} f(\{i\} \mid ([i-1]\cap B\setminus R)) = f(B\setminus R).$$
    We conclude that $f(R \mid (B\setminus R)) = f(B) - f(B \setminus R) \leq l(B) - l(B \setminus R) = l(R)$.
\end{proof}
Let $\theta > 0$ be a parameter that we will specify later.

\begin{mylemma}\label{lem: orienteering swap improvement}
    Given a current solution $B$, assume we have found, for some $b\in B$, a $b$-$b$ walk $W$ on $\Dcal(B)$ of length at most $L$ that satisfies $f_B(V(W)) >  L + \size^3\theta$.
    Then there exists a feasible exchange $(B \setminus R) \cup A$ in $\Ical_1 \cap \Ical_2$ such that 
    \begin{enumerate}
        \item $(B \setminus R) \cup A$ is a common basis,
        \item $f((B \setminus R) \cup A) > f(B) + \theta$, and 
        \item this exchange can be found in polynomial time. 
    \end{enumerate}
\end{mylemma}

\begin{proof}[Proof of \Cref{lem: orienteering swap improvement}]
     We construct a collection of simple directed cycles $\{C_1, \hdots, C_q\}$ which act as proxy for $W$ as follows. Order the vertices visited by $W$ by their first appearance in $W$ to get $V(W) = \{e_1, e_2, \hdots, e_p\} \subseteq E$ with $e_1 = b$. Append $b$ to the end of this list as well and denote it by $e_{p+1}$. For each $i \in [p]$ define a new walk $W_i$ to be the shortest directed path from $e_i$ to $e_{i+1}$. Such a path exists since $W$ reaches $e_{i+1}$ from $e_i$. Let $\bar{W}$ be the resulting walk defined by concatenating $W_1$ up to $W_p$. Notice that $\bar{W}$ satisfies the following properties:
     \begin{enumerate}
         \item $V(\bar{W}) \supseteq V(W)$, 
         \item the length of $\bar{W}$ is at most the length of $W$ because shortest paths (and hence shortest walks) are chosen, and 
         \item the number of arcs traversed by $\bar{W}$ is at most $\size^2$ since every shortest path can be taken to use at most $\size$ arcs and $p \leq \size$.
     \end{enumerate}
Now $\bar{W}$ is a $b$-$b$ walk which we can decompose into simple directed cycles by partitioning its arc occurrences. This gives us the collection $\{C_1, \hdots, C_q\}$ where $q \leq \size^2$.

     Subadditivity guarantees that $$f_B(V(W)) \leq f_B(V(\bar{W})) \leq \displaystyle\sum_{j =1}^q f_B(V(C_j)).$$
     Recall that the length of the walk $\bar{W}$ is the sum over the arc lengths traversed, which is equal to 
     \begin{equation*}
     \sum_{j =1}^q\sum_{(u, v) \in C_j} l(u, v) = \displaystyle\sum_{j =1}^q l(V(C_j)).
     \end{equation*} 
     Thus the following two inequalities hold
     \begin{align*}
        L + \size^3 \theta &< \displaystyle\sum_{j =1}^q f_B(V(C_j)), \text{ and} \\
        L &\geq \displaystyle\sum_{j =1}^q l(V(C_j))
    \end{align*}
    where the second inequality follows from the fact that the length of $\bar{W}$ is at most the length of $W$ which is assumed to be bounded by $L$.
    
    By averaging, there must exist a cycle in $\Dcal(B)$ which satisfies $f_B(V(C)) > l(V(C)) + \size\theta$. This cycle is not yet a feasible exchange, but by \Cref{lem: fractional cycle repair}, we can decompose this cycle into a collection of feasible exchanges $(B \setminus R_j) \cup A_j$ for $j \in [m]$ such that 
    \begin{align*}
        f_B(V(C)) &\leq \sum_{j=1}^m\lambda_j f_B(A_j), \text{ and} \\
        l(V(C))  &= \sum_{j=1}^m \lambda_j l(R_j).
    \end{align*}
    The first inequality follows from fractional subadditivity (\Cref{lem: frac additivity})  and the second equality follows from linearity of length. 
    It follows that
    \begin{equation*}
        \sum_{j=1}^m\lambda_j f_B(A_j) > \sum_{j=1}^m \lambda_j l(R_j) + \size \theta .
    \end{equation*}
    Thus since $|C| \leq \size$, we may find one of these feasible exchanges $(B \setminus R_j) \cup A_j$ satisfying $f_B(A_j) >  l(R_j) + \theta$. Finally, we show that this exchange is indeed improving for $B$. We have that 
    \begin{align*}
        f((B\setminus R_j) \cup A_j) - f(B)& = f(A_j \mid (B\setminus R_j)) - f(R_j \mid (B\setminus R_j)) \\
        & \geq  f(A_j \mid B) - l(R_j) \\
        & > \theta,
    \end{align*}
    where the first inequality follows from submodularity and \Cref{claim: modular lower support}. 
\end{proof}

Next, we define a variant of $l$ that uses only discrete integer values. The purpose of this
is twofold: it allows a reduction to a version of Submodular Orienteering where all arc
lengths are polynomially bounded integers, which together with the reduction in \Cref{sec:lower-bounds} implies equivalence (up to constants) of Submodular Orienteering with polynomially bounded lengths and Submodular Matroid Intersection Basis.
Furthermore,
it avoids unnecessary dependence on the values of $f$ in the number of operations of our algorithm.
To this end, let $l'(u, e) \in \mathbb Z$ with $l(u, e) \in [l'(u,e) \cdot \theta, (l'(u,e) + 1) \cdot \theta)$. Here $\theta$ is the unspecified parameter from earlier.
Note that $l'(u, e)$ is non-negative.
We now derive a similar lemma to \Cref{lem: orienteering swap improvement}, but with respect to $l'$.
\begin{mylemma}\label{lem: orienteering swap improvement discrete}
    Given a current solution $B$, assume we have found a $b$-$b$ walk $W$ on $\Dcal$
    satisfying $f_B(V(W)) > L' \theta + 2 \size^3 \theta$ of $l'$-length at most $L'$ for some $b\in B$. Then there exists a feasible exchange $(B\setminus R)\cup A$
    in $\Ical_1 \cap \Ical_2$ such that
     \begin{enumerate}
        \item $(B \setminus R) \cup A$ is a common basis,
        \item $f((B \setminus R) \cup A) > f(B) + \theta$, and 
        \item this exchange can be found in polynomial time. 
    \end{enumerate}
\end{mylemma}
\begin{proof}
    As in \Cref{lem: orienteering swap improvement}, we may assume without loss of generality that $W$ contains at most $\size^2$ arc occurrences.
    Recall that $l(u, e) \le (l'(u, e) + 1) \theta$ for each arc $(u, e)$.
    It follows that the $l$-length of $W$ denoted by $L$ satisfies
    \begin{equation*}
        L \le (L' + \size^2) \cdot \theta < f_B(V(W)) - \size^3 \theta .
    \end{equation*}
    The statement now follows from \Cref{lem: orienteering swap improvement}.
\end{proof}

Thus \Cref{lem: orienteering swap improvement discrete} states that we are able to find an improving swap for $B$ that maintains a common basis but increases the submodular value, if we have found a suitable budget $L'$ and walk on $\Dcal(B)$. Therefore, in one iteration of our algorithm, we are given a common basis $B$ and budget $L'$, and we find an improving feasible exchange when one exists for the given input. This is summarized in \Cref{alg:orienteer-to-cycle}.
Notice that a $b$-$b$ walk that is stationary does not have positive reward $f_B$.
So, if there exists an improving non-stationary walk, it will be returned by \Cref{alg:orienteer-to-cycle}.

\begin{algorithm}[H]
        \DontPrintSemicolon
        \SetKw{KwBy}{by}
        \caption{One iteration of Submodular Matroid Intersection Basis algorithm}
        \label{alg:orienteer-to-cycle}
        \KwIn{Common basis $B$ of $\Mcal_1$ and $\Mcal_2$, parameter $\theta >0$ quantifying the value improvement, budget $L'$.}
        \KwOut{Common basis $\bar{B}$ of $\Mcal_1$ and $\Mcal_2$ with $f(\bar{B}) \geq f(B)$.}
        Assign arc lengths $l'$ to $\Dcal(B)$ as described above. \;
        \For{every $b \in B$}{
        Use \Cref{lem: orienteering-output} to find a $b$-$b$ walk $W$ of $\Dcal(B)$ of $l'$-length at most $L'$ approximately maximizing $f_B(V(W))$.\;
        
        \If{$f_B(V(W)) > L'\theta + 2\size^3\theta$}{
        Apply \Cref{lem: orienteering swap improvement discrete} to obtain a feasible exchange $(B \setminus R) \cup A$ satisfying $f((B \setminus R) \cup A) > f(B) + \theta$\;
        \Return $(B \setminus R) \cup A$}
        
        }
         
        \Return $B$
\end{algorithm}

We now show how to apply \Cref{alg:orienteer-to-cycle} repeatedly over budget guesses to get an approximation algorithm for Submodular Matroid Intersection Basis. 

\subsection{The full maximum Submodular Matroid Intersection Basis algorithm}
Before we start the main algorithm, we assume that the initial input is a common basis $B_0$ such that $f(B_0) \geq \frac{f(\spm)}{K}$. Such a basis can be found in polynomial time by trying every element $e \in E$ and extending to a common basis by contraction and matroid intersection. The one with the largest submodular value must have value at least $\frac{f(\spm)}{K}$. This is because $f(\spm) \leq \sum_{e \in \spm} f(e) \leq K \max_{e \in E}f(e)$ where we have used \Cref{lem: frac additivity} in the first inequality. Note that we can assume there is a singleton of value $>0$ otherwise $f(\spm) = 0$ and the theorem is trivial.

The main algorithm is summarized in \Cref{alg: qptas}. It consists of guessing a budget and repeatedly applying \Cref{alg:orienteer-to-cycle} for enough iterations to obtain a solution that has submodular value at least $\frac{1}{\alpha + 1 + \epsilon} f(\spm)$. Here $\alpha$ is the approximation guarantee of finding an optimal $b$-$b$ walk. This $\alpha$ loss is inherent in the call to \Cref{lem: orienteering-output} by \Cref{alg:orienteer-to-cycle}. The budget $L'$ is the $l'$-length of a walk, and we would like to find a walk that is a simple cycle in $\Dcal(B)$. Thus the $l$-length of a walk is between $0$ and $l(B) = f(B)$, and the budget $L'$ is therefore guessed for all integers between $0$ and $\lfloor \sfrac{f(B)}{\Theta} \rfloor$.

\begin{algorithm}[H]
        \DontPrintSemicolon
        \SetKw{KwBy}{by}
        \caption{Maximum Submodular Matroid Intersection Basis algorithm}
        \label{alg: qptas}
        \KwIn{Common basis $B$ of $\Mcal_1$ and $\Mcal_2$, parameter $\epsilon >0$ quantifying the value improvement, approximation guarantee $\alpha$ of Submodular Orienteering.}
        \KwOut{Common basis $\tilde{B}$ of $\Mcal_1$ and $\Mcal_2$ with $f(\tilde{B}) \geq \frac{1}{\alpha + 1 + \epsilon}\cdot f(\spm)$.}

        Let $\improve \gets $True \hfill \tcp{Local optimality indicator.} 
        \While{$\improve$}{
        Let $ \theta \gets \frac{\epsilon}{2 \alpha K \size^3} f(B)$\hfill \tcp{Value of a single improvement.}

        $\improve \gets $False\;
        \For{every budget $L' \in \{0, 1, \dotsc, \lfloor f(B) / \theta \rfloor \}$}{$\bar{B} \gets$ output of \Cref{alg:orienteer-to-cycle} on $B, \Mcal_1, \Mcal_2, \theta, L'$\;
        \If{$f(\bar{B}) \geq f(B) + \theta$}{$\improve \gets $True\;  $B \gets \bar{B}$ \;\textbf{break}\hfill\tcp{Leave the for-loop.}}
        }
         }
        \Return $B$
\end{algorithm}

Recall our main theorem on the approximation guarantee of Submodular Matroid Intersection Basis:
\orienteerreduce*

We prove that \Cref{alg: qptas} outputs a solution satisfying the guarantees above. We will first show the approximation guarantee of \Cref{alg: qptas} and we will then analyze its runtime.  

\subsection{Analysis of approximation guarantee}

Let $B$ be the output of \Cref{alg: qptas}. Notice that termination of \Cref{alg: qptas} guarantees that in the final iteration, all choices of budgets $L'$ result in an output $\bar{B}$ of \Cref{alg:orienteer-to-cycle} with $f(\bar{B}) < f(B) + \theta$. Consider $\Dcal(B)$ and fix any simple directed cycle $C$ in $\Dcal(B)$. Such a cycle has length $l(V(C)) \le l(B) = f(B)$, so there exists some $L' \in \{0,1,\dotsc,\lfloor f(B) / \theta \rfloor\}$ satisfying $L' = \lfloor l(V(C)) / \theta \rfloor$.
Note that $l'(V(C)) \le L'$ since $l'(V(C)) \in \ZZ$.

If this cycle has value $f_B(V(C)) > \alpha L'\theta +  2 \alpha \size^3 \theta$, then there exists a $b$-$b$ walk $W^*$ on $\Dcal(B)$ satisfying $f_B(V(W^*)) > \alpha L'\theta + 2\alpha \size^3 \theta$ for some $b\in V(C)\cap B$. This walk contains at most $2K = 2|B|$ vertices, since it is alternating
between $B$ and $E\setminus B$.
Recall that we run Submodular Orienteering on every possible $b \in B$.
Once $b \in V(C)\cap B$ is guessed, the Submodular Orienteering algorithm outputs a walk $W$ such that its $l'$-length is at most $L'$ and $f_B(V(W)) \geq \frac{f_B(V(W^*))}{\alpha} > L' + 2\size^3\theta$. Then at least one of the calls to \Cref{alg:orienteer-to-cycle} returns an exchange $(B \setminus R) \cup A$ such that $f((B\setminus R) \cup A) > f(B) + \theta$ by \Cref{lem: orienteering swap improvement discrete}. But this contradicts the termination of \Cref{alg: qptas}. Thus every cycle in $\Dcal(B)$ must satisfy $f_B(V(C)) \leq \alpha \lfloor l(V(C)) / \theta \rfloor \theta + 2\alpha \size^3 \theta \le \alpha l(V(C)) + 2\alpha \size^3 \theta$.

We therefore take all cycles in $\Dcal(B)$ between $\spm$ and $B$ and sum this inequality to get a bound on $f(B)$ with respect to $f(\spm)$. By \Cref{lem: disj cycle decomposition}, there exists vertex disjoint cycles $C_1, \hdots, C_k$ in $\Dcal(B)$ such that $\spm= B \Delta \lb \bigcup_{i = 1}^k V(C_i)\rb$. Then $$f(\spm) = f\lb B \Delta \lb \bigcup_{i = 1}^k V(C_i)\rb\rb 
         \leq f\lb B \cup \bigcup_{i =1}^k V(C_i) \rb 
          = f_B\lb \bigcup_{i =1}^k V(C_i)\rb + f(B)$$
by monotonicity of $f$. Using submodularity and the fact that every cycle satisfies $f_B(V(C_i)) \leq \alpha l(V(C_i)) + 2\alpha \size^3 \theta$, we get that
\begin{align*}
    f(\spm)
        & \leq \displaystyle\sum_{i =1}^k f_B(V(C_i)) + f(B) \\
        & \leq \displaystyle\sum_{i =1}^k 
        \ld \alpha l(V(C_i)) + 2\alpha \size^3 \theta \rd + f(B) \\
        & \leq (1 + \alpha) f(B) + 2\alpha \size^3\theta k,
\end{align*}
where the second inequality follows from linearity of lengths and $l(B) = f(B)$. Notice that $k\leq K$, where $K$ is the size of a common base in $\Ical_1 \cap \Ical_2$. Then by the choice of $\theta$, $f(\spm) \leq (1 + \alpha + \epsilon) f(B)$ and we conclude the analysis of the approximation rate.

\subsection{Analysis of running time}
\Cref{alg:orienteer-to-cycle} makes at most $n$ many calls to Submodular Orienteering and runs in polynomial time. The decomposition into feasible exchanges also runs in time polynomial in $\size$.
Notice that by choice of $\theta$, the $l'$-lengths and $L'$ are always at most $2\alpha K \size^3 /\epsilon$ in any call to \Cref{alg:orienteer-to-cycle}, i.e., they are polynomially bounded integers. Then the full maximum common basis algorithm, \Cref{alg: qptas}, calls \Cref{alg:orienteer-to-cycle} whenever there still exists a $\theta$-improving exchange and for every budget in the budget guesses. Notice that by the discussion before, there are only polynomially many budget guesses. Moreover, every iteration guarantees $f(\bar{B}) \geq f(B) \lb 1 + \frac{\epsilon}{2\alpha K\size^3}\rb$ and we start with a basis $B_0$ satisfying $f(B_0) \geq \frac{f(\spm)}{K}$. Then after $w$ many iterations, $f(B) \geq \lb 1 + \frac{\epsilon}{2\alpha K\size^3}\rb^w f(B_0)$ so that after $\Theta\lb \frac{\alpha K \size^3}{\epsilon} \log K\rb$ many iterations, the value would exceed $f(\spm)$. Thus the algorithm must terminate before that.

\subsection{Exact Fair Matroid Monotone Submodular Maximization}\label{sec: exact fairness}

\label{sec:fairness}
Finally, we show how the above approximation algorithm for Submodular Matroid Intersection Basis (\Cref{alg: qptas}) can be used to achieve matching guarantees for Fair Matroid Monotone Submodular Maximization. The equivalence between the two problems was observed in \cite{halabiFairnessSubmodularMaximization2024}, and we include a different reduction here. In contrast to \cite{halabiFairnessSubmodularMaximization2024}, this reduction does not use any properties of $f$ such as monotonicity. Recall the following theorem with respect to solving this problem exactly. 

\exactfairness*


\begin{proof}
We show here that this can be reduced to a Submodular Matroid Intersection Basis problem.
The reduction is as follows. We may assume, for example, through enumeration, that we know
the cardinality $K$ of the optimal solution. Note that we have $\sum_{i=1}^r \ell_i \le K \le \sum_{i=1}^r k_i$.
We restrict $\mathcal I$ to sets of cardinality at most $K$. Note that this retains the matroid
structure.

We now construct a gammoid that models the fairness constraint. Recall that
a gammoid is a matroid defined from a directed graph $G = (V, A)$ with a set of possible sources $S \subseteq V$
and a set of possible destinations $N\subseteq V$.
A set $T\subseteq N$ is independent if there are vertex disjoint
paths from $S$ to $T$.
To model our fairness constraints, we construct a graph with three layers $V = V_1 \cup V_2 \cup V_3$.
$V_1$ consists of $K$ source vertices $S$. More precisely, $V_1 = S = L_1\cup\cdots\cup L_r \cup F$,
where $L_i$ are $\ell_i$ many vertices for each $i\in [r]$ and $F$ consists of $K - \sum_{i=1}^r \ell_i$ many
vertices. We set $V_2 = U_1 \cup \cdots \cup U_r$, where $U_i$ are $k_i$ many vertices for each $i\in [r]$.
There is an arc from each vertex in $L_i\cup F$ to each vertex in $U_i$ for $i\in[r]$.
Finally, $E = V_3$ consists of one vertex for each element of the ground set of the matroid $\Mcal$.
We add an arc from each vertex in $U_i$ to each vertex in $E_i$ for $i\in [r]$.
It is easy to see that the bases of this gammoid are exactly the sets
\begin{equation*}
        \mathcal F := \{ S\subseteq E : |S| = K \text{ and } \ \ell_i \le |S\cap E_i| \le k_i  \quad\forall i \in [r]\} \ .
\end{equation*}
It therefore suffices to find a common base of $(E,\mathcal I)$ and $(E,\mathcal F)$ that maximizes the submodular function. Note that the reduction is only for \textit{exact} common basis and not for a cardinality relaxation. Nevertheless, \Cref{alg: qptas} gives an algorithm which returns a common matroid basis whose guarantee therefore translates to exactly satisfying the lower and upper fairness constraints in Fair Matroid Monotone Submodular Maximization. The reduction concludes \Cref{coro:exact-lower-bounds}.
\end{proof}
\section{Hardness of a maximization over perfect matchings}\label{sec:lower-bounds}

In this section, we prove Theorem~\ref{thm:hardness}, which says that a better than logarithmic approximation is not possible for maximization of a monotone submodular function over perfect matchings in a bipartite graph.

Our lower bounds are based on hardness of the Group Steiner Tree problem,
specifically, a hardness result from~\cite{10.1137/20M1312988}, which itself is a refinement
of a result in~\cite{halperin2003polylogarithmic}. The hardness results of \cite{10.1137/20M1312988} assume that the projection games conjecture holds and that $\mathrm{NP}\not\subseteq \bigcap_{\epsilon > 0}\mathrm{ZPTIME}(2^{n^\epsilon})$. Weaker hardness guarantees that only use the assumption $\mathrm{NP}\not\subseteq \mathrm{ZPTIME}(n^{\mathrm{polylog}(n)})$ can also be obtained directly from \cite{halperin2003polylogarithmic}.

Group Steiner Tree is defined as follows:
We are given an undirected graph $G = (V,E)$ with edge weights $w\colon E \to \RR$. Furthermore, there
are groups $g_1,\dotsc,g_k\subseteq V$. A feasible solution is a connected subgraph that contains at least one
vertex from each group. Our goal is to find a minimum weight feasible solution.
We refer to the case where all weights are equal to $1$ as the \emph{unit weight case}.


\begin{mythm}[Theorem B.5 in~\cite{10.1137/20M1312988}]
	\label{prop:gst-hardness}
	There is no quasi-polynomial time $\omega(\log\log n / \log^{2} n)$-ap\-prox\-i\-mation
	algorithm for Group Steiner Tree assuming $\mathrm{NP}\not\subseteq \bigcap_{\epsilon > 0}\mathrm{ZPTIME}(2^{n^\epsilon})$ and the projection games conjecture holds, even on instances with $\log n = \Theta(\log k)$.\footnote{The restriction $\log k = \Theta(\log n)$ is not in the statement of Theorem C.7, but it is explicitly mentioned in the proof.}
\end{mythm}
In our reductions, we will use hardness for the unit weight case, which is not explicit in~\cite{halperin2003polylogarithmic, 10.1137/20M1312988}. However, it follows easily by standard rounding arguments.
\begin{mycor}
	There is no quasi-polynomial time $\omega(\log\log n / \log^{2} n)$-approximation
	algorithm for Group Steiner Tree assuming $\mathrm{NP}\not\subseteq \bigcap_{\epsilon > 0}\mathrm{ZPTIME}(2^{n^\epsilon})$ and the projection games conjecture holds, even on unit weight instances with $\log k = \Theta(\log n)$.
\end{mycor}
\begin{proof}
	Suppose that there is such an algorithm for the unit weight case.
	We extend it to the weighted case as follows: Guess the maximal weight $w_{\max}$ of an edge in the solution.
	We can remove every edge with weight greater than $w_{\max}$ from the instance. Furthermore, we contract each edge of weight less than
	$w_{\max} / |E|$. For any solution in the resulting instance, we can obtain a solution of
	the original instance by including all contracted edges whose endpoints are vertices of the solution.
	The cost of the solution will increase by at most $\frac{w_{\max}}{|E|} \cdot |E| \le \opt$.
	Hence, it suffices to design an approximation algorithm for the new instance.

	Note that all weights in the new instance are in the range $[\sfrac{w_{\max}}{|E|}, w_{\max}]$.
	We round each weight to the next higher integer multiple of $\sfrac{w_{\max}}{|E|}$. This transformation again
	loses a factor of at most $2$ in the approximation rate, since it can at most double each weight.
	Finally, we replace each edge of weight
	$i\cdot\frac{w_{\max}}{|E|}$ by a path of $i$ edges, each with weight $1$. Since $i\le |E|$, the 
	transformation increases the instance size only by a polynomial factor. 
    Furthermore, it is approximation-preserving. We solve it using the presumed algorithm for the unit weight case,
	which yields a $\omega(\log\log n / (4\log^2 n))$-approximation for Group Steiner Tree. 
	By \Cref{prop:gst-hardness}, this implies $\mathrm{NP}\subseteq \bigcap_{\epsilon > 0}\mathrm{ZPTIME}(2^{n^\epsilon})$ or that the projection games conjecture is false.
\end{proof}

In the following theorem, we prove hardness for Submodular Orienteering in the unbounded case, where we do not have a length constraint. The proof is an easy modification
of a proof by Chekuri and Pal~\cite{ChekuriPal05}, which was stated only for the variant with an arbitrary length bound.
The hardness for unbounded instances is important for our later reduction from matching.
We further restrict to the layered case, where the vertices are grouped in layers $V_1\sqcup\cdots\sqcup V_H$ with $s\in V_1$, $t\in V_H$ and all arcs going only from one layer to the next.
We note that unbounded instances are equivalent to instances with polynomially bounded integer lengths: For a given length bound $L$, we can change the
vertex set from $V$ to $V\times [L]$, where the second component encodes the length of the walk so far. The arc set and submodular function are translated in the natural way. We omit the details here for brevity.

\begin{mythm}\label{th:unbounded-lb}
	For any $\epsilon > 0$, there is no quasi-polynomial time $\omega(\log\log n / \log n)$-approximation algorithm for Submodular Orienteering
	assuming $\mathrm{NP}\not\subseteq \bigcap_{\epsilon > 0}\mathrm{ZPTIME}(2^{n^\epsilon})$ and the projection games conjecture holds.
    This holds even for the unbounded layered case.
\end{mythm}

\begin{proof}
	Suppose that there is a $\sfrac{1}{\alpha}$-approximation algorithm for Submodular Orienteering over graphs $G = (V, A)$ for $\alpha = o(\log n / \log\log n)$, where $n\coloneqq |V|$.
	Using it, we derive an approximation algorithm for Unit Weight Group Steiner Tree.
	Let $G = (V, E)$ and $g_1,\dotsc,g_k\subseteq V$ be the instance of Group Steiner Tree and let $T$ be the number of edges in the optimal solution.
    Further, assume that $\log |V| = \Theta(\log k)$.
	Our instance of Submodular Orienteering will seek a walk over $G$ that visits
	vertices from many of the groups. For technical reasons, let us assume that we know a vertex $r\in V$ in the solution to Group Steiner Tree. Both $T$ and $r$ can be obtained by a standard guessing framework. 

	Note that we can turn a solution for Group Steiner Tree with $T$ edges into a walk from $r$ to $r$ that
	visits all groups and contains exactly $2T$ edges: By duplicating each edge of the solution,
	we obtain a Eulerian graph, which can be turned into a closed walk.
	We construct the following instance of Submodular Orienteering to search for a closed walk that
	is approximately as good:
	There are layers $V_1,\dotsc,V_{2T+1}$, each of which is a copy of $V$.
	There is an arc from $u\in V_i$ to $v\in V_{i+1}$ if $\{u,v\}$ is an edge in the Group Steiner Tree
	instance. We define the submodular function $f\colon  V_1\sqcup\cdots\sqcup V_{2T+1} \to \RR$ with
	$f(U)$ being the number of groups intersecting $U$. We set $s$ to be the copy of $r$ in $V_1$
	and $t$ to be the copy of $r$ in $V_{2T+1}$.
	If the Group Steiner Tree instance has an optimum of $T$, then the optimum of the Submodular
	Orienteering instance is $k$. Thus, the presumed algorithm will compute a closed walk of 
	length $2T$ that visits at least $\sfrac{k}{\alpha}$ vertices.

	This walk is a connected subgraph that covers $\sfrac{k}{\alpha}$
	groups. We remove all groups that were covered and we repeat until no group is left.
	Finally, we output the union of the solutions from all iterations.
	The number of iterations is at most $O(\alpha \log k) = O(\alpha \log n)$ and each iteration has at most twice the  cost of the optimal Group Steiner Tree solution.
    We have thus obtained an $\sfrac{1}{\beta}$-approximation with $\beta = O(\alpha \log n) = o(\log^2 n / \log\log n)$, which by \Cref{prop:gst-hardness} implies
	$\mathrm{NP}\subseteq \bigcap_{\epsilon > 0}\mathrm{ZPTIME}(2^{n^\epsilon})$ or that the projection games conjecture is false.
\end{proof}

We now recall our main theorem which claims the same hardness for maximizing a submodular function over bipartite perfect matchings.

\hardness*

\begin{proof}
	Let $G = (V, A)$ be a unbounded layered instance of Submodular Orienteering with submodular function $f$ and start and terminal nodes $s, t$.
	In the following, we define an undirected bipartite graph $B = (V^+\cup V^-, E)$.
	Informally, $B$ is derived by replacing each vertex, except for source and sink, by a pair of vertices,
	where one of them is incident to all outgoing arcs and the other is incident to all incoming arcs of the original graph. Furthermore,
	each such pair of vertices is connected by a new edge. See \Cref{fig:reduction} for an illustration.
	Formally, let
	\begin{equation*}
		V^+ \coloneqq \{v^+ : v\in V\setminus \{t\}\},\quad V^- \coloneqq \{v^- : v\in V\setminus \{s\}\} \ .
	\end{equation*}
	Furthermore, we define 
	\begin{equation*}
		E \coloneqq \{\{v^-, v^+\} : v\in V\setminus \{s,t\}\} \cup \{\{u^+, v^-\} : (u,v)\in A\} \ .
	\end{equation*}
	We let $f' \colon 2^E\to \RR$ with
	\begin{equation*}
		f'(F) = f\left(\bigcup_{\substack{\{u^+, v^-\}\in F \\ u\neq v}} \{u,v\}\right) \qquad \forall F\subseteq E .
	\end{equation*}
	In other words, $f'$ is the value of all vertices contained in the given edges except for the edges that are between
	pairs of vertices created from the same vertex in $G$. If $f$ is monotone submodular then
	so is $f'$.

	For each perfect matching $M$ in $B$, there is an $s$-$t$ path $P$ in $G$ with $f(V(P)) = f'(M)$ and vice versa. Note that since $G$ is layered, every walk is a path.
    
	Consider a perfect matching $M$ in $B$.
	Let $P \coloneqq \{(u,v) \in A : \{u^+, v^-\} \in M\}$.
	It is easy to see that every vertex in $V\setminus\{s,t\}$ either has exactly one incoming and one outgoing arc in $P$ or has no incident arcs in $P$.
	Furthermore, $s$ has exactly one outgoing arc in $P$ and $t$ has exactly one incoming arc in $P$.
	Since $G$ is acyclic, $P$ must be the arcs of an $s$-$t$ path and $f(V(P)) = f'(M)$.

	Now consider an $s$-$t$ path $P$ in $G$. Let $M = \{\{u^+,v^-\} : (u,v)\in A(P)\} \cup \{\{v^-, v^+\} : v\in V\setminus V(P)\}$.
	Then $M$ is a perfect matching in $B$ and $f'(M) = f(V(P))$.

	Optimizing $f'$ over perfect matchings in $B$ is therefore equivalent to the instance of Submodular Orienteering, which together with \Cref{th:unbounded-lb} concludes the proof.
\end{proof}
\begin{figure}
	\centering
		\begin{tikzpicture}[scale=0.8, every node/.style={inner sep=0pt,minimum size=8pt}]
		\node[draw,circle](s) at (0, 3) {};
		\node[draw,circle](v11) at (2, 1) {};
		\node[draw,circle](v12) at (2, 2) {};
		\node[draw,circle](v13) at (2, 3) {};
		\node[draw,circle](v14) at (2, 4) {};
		\node[draw,circle](v15) at (2, 5) {};
		\draw[->] (s) -- (v11);
		\draw[->] (s) -- (v12);
		\draw[->] (s) -- (v13);
		\draw[->, ultra thick, dashed] (s) -- (v14);
		\draw[->] (s) -- (v15);
		\node[draw,circle](v21) at (4, 1) {};
		\node[draw,circle](v22) at (4, 2) {};
		\node[draw,circle](v23) at (4, 3) {};
		\node[draw,circle](v24) at (4, 4) {};
		\node[draw,circle](v25) at (4, 5) {};
		\draw[->] (v11) -- (v23);
		\draw[->] (v11) -- (v25);
		\draw[->] (v12) -- (v21);
		\draw[->] (v12) -- (v24);
		\draw[->] (v13) -- (v22);
		\draw[->, ultra thick, dashed] (v14) -- (v25);
		\draw[->] (v15) -- (v24);
		\node[draw,circle](v31) at (6, 1) {};
		\node[draw,circle](v32) at (6, 2) {};
		\node[draw,circle](v33) at (6, 3) {};
		\node[draw,circle](v34) at (6, 4) {};
		\node[draw,circle](v35) at (6, 5) {};
		\draw[->] (v21) -- (v31);
		\draw[->] (v21) -- (v33);
		\draw[->] (v22) -- (v32);
		\draw[->] (v22) -- (v33);
		\draw[->] (v23) -- (v31);
		\draw[->] (v24) -- (v34);
		\draw[->] (v25) -- (v35);
		\draw[->, ultra thick, dashed] (v25) -- (v34);
		\node[draw,circle](v31) at (6, 1) {};
		\node[draw,circle](t) at (8, 3) {};
		\draw[->] (v31) -- (t);
		\draw[->] (v32) -- (t);
		\draw[->] (v33) -- (t);
		\draw[->, ultra thick, dashed] (v34) -- (t);
		\draw[->] (v35) -- (t);
	\end{tikzpicture}
	\qquad
		\begin{tikzpicture}[scale=0.8, every node/.style={inner sep=0pt,minimum size=8pt}]
		\node[draw,circle](s) at (0, 3) {};
		\node[draw,circle](u11) at (1.5, 1) {};
		\node[draw,circle](u12) at (1.5, 2) {};
		\node[draw,circle](u13) at (1.5, 3) {};
		\node[draw,circle](u14) at (1.5, 4) {};
		\node[draw,circle](u15) at (1.5, 5) {};
		\node[draw,circle](v11) at (2.5, 1) {};
		\node[draw,circle](v12) at (2.5, 2) {};
		\node[draw,circle](v13) at (2.5, 3) {};
		\node[draw,circle](v14) at (2.5, 4) {};
		\node[draw,circle](v15) at (2.5, 5) {};
		\draw[-] (s) -- (u11);
		\draw[-] (s) -- (u12);
		\draw[-] (s) -- (u13);
		\draw[-, ultra thick, dashed] (s) -- (u14);
		\draw[-] (s) -- (u15);

		\draw[-, ultra thick, dashed] (u11) -- (v11);
		\draw[-, ultra thick, dashed] (u12) -- (v12);
		\draw[-, ultra thick, dashed] (u13) -- (v13);
		\draw[-] (u14) -- (v14);
		\draw[-, ultra thick, dashed] (u15) -- (v15);
		\node[draw,circle](u21) at (3.5, 1) {};
		\node[draw,circle](u22) at (3.5, 2) {};
		\node[draw,circle](u23) at (3.5, 3) {};
		\node[draw,circle](u24) at (3.5, 4) {};
		\node[draw,circle](u25) at (3.5, 5) {};
		\node[draw,circle](v21) at (4.5, 1) {};
		\node[draw,circle](v22) at (4.5, 2) {};
		\node[draw,circle](v23) at (4.5, 3) {};
		\node[draw,circle](v24) at (4.5, 4) {};
		\node[draw,circle](v25) at (4.5, 5) {};
		\draw[-] (v11) -- (u23);
		\draw[-] (v11) -- (u25);
		\draw[-] (v12) -- (u21);
		\draw[-] (v12) -- (u24);
		\draw[-] (v13) -- (u22);
		\draw[-, ultra thick, dashed] (v14) -- (u25);
		\draw[-] (v15) -- (u24);

		\draw[-, ultra thick, dashed] (u21) -- (v21);
		\draw[-, ultra thick, dashed] (u22) -- (v22);
		\draw[-, ultra thick, dashed] (u23) -- (v23);
		\draw[-, ultra thick, dashed] (u24) -- (v24);
		\draw[-] (u25) -- (v25);
		\node[draw,circle](u31) at (5.5, 1) {};
		\node[draw,circle](u32) at (5.5, 2) {};
		\node[draw,circle](u33) at (5.5, 3) {};
		\node[draw,circle](u34) at (5.5, 4) {};
		\node[draw,circle](u35) at (5.5, 5) {};
		\node[draw,circle](v31) at (6.5, 1) {};
		\node[draw,circle](v32) at (6.5, 2) {};
		\node[draw,circle](v33) at (6.5, 3) {};
		\node[draw,circle](v34) at (6.5, 4) {};
		\node[draw,circle](v35) at (6.5, 5) {};
		\draw[-] (v21) -- (u31);
		\draw[-] (v21) -- (u33);
		\draw[-] (v22) -- (u32);
		\draw[-] (v22) -- (u33);
		\draw[-] (v23) -- (u31);
		\draw[-] (v24) -- (u34);
		\draw[-] (v25) -- (u35);
		\draw[-, ultra thick, dashed] (v25) -- (u34);

		\draw[-, ultra thick, dashed] (u31) -- (v31);
		\draw[-, ultra thick, dashed] (u32) -- (v32);
		\draw[-, ultra thick, dashed] (u33) -- (v33);
		\draw[-] (u34) -- (v34);
		\draw[-, ultra thick, dashed] (u35) -- (v35);
		\node[draw,circle](t) at (8, 3) {};
		\draw[-] (v31) -- (t);
		\draw[-] (v32) -- (t);
		\draw[-] (v33) -- (t);
		\draw[-, ultra thick, dashed] (v34) -- (t);
		\draw[-] (v35) -- (t);
	\end{tikzpicture}
	\caption{Reduction from Orienteering to Matching}
	\label{fig:reduction}
\end{figure}

\section{Tool and Computational Resource Disclosure} The authors used GPT to assist with generating a simplified analysis of the quasi-polynomial approximation algorithm in \Cref{sec: submod common basis}. In particular, the tool was largely responsible for coming up with \Cref{lem: fractional cycle repair}, which simplified the rest of the proof. This was independently verified and extended by the authors. All writing was done by the authors.

\newpage
\bibliographystyle{alpha}
\bibliography{biblio}

@article{calinescu2011maximizing,
  title={Maximizing a monotone submodular function subject to a matroid constraint},
  author={Calinescu, Gruia and Chekuri, Chandra and P\'{a}l, Martin and Vondr{\'a}k, Jan},
  journal={SIAM Journal on Computing},
  volume={40},
  number={6},
  pages={1740--1766},
  year={2011},
  publisher={SIAM}
}

@inproceedings{filmus2012tight,
  title={A tight combinatorial algorithm for submodular maximization subject to a matroid constraint},
  author={Filmus, Yuval and Ward, Justin},
  booktitle={2012 IEEE 53rd Annual Symposium on Foundations of Computer Science},
  pages={659--668},
  year={2012},
  organization={IEEE}
}

@article{lee2010submodular,
  title={Submodular maximization over multiple matroids via generalized exchange properties},
  author={Lee, Jon and Sviridenko, Maxim and Vondr{\'a}k, Jan},
  journal={Mathematics of Operations Research},
  volume={35},
  number={4},
  pages={795--806},
  year={2010},
  publisher={INFORMS}
}

@article{mahabadi2026improved,
  title={Improved Algorithms for Fair Matroid Submodular Maximization},
  author={Mahabadi, Sepideh and Sarkar, Sherry and Tarnawski, Jakub},
  journal={Advances in Neural Information Processing Systems},
  volume={38},
  pages={152592--152612},
  year={2026}
}

@article{feige1998threshold,
  title={A threshold of ln n for approximating set cover},
  author={Feige, Uriel},
  journal={Journal of the ACM (JACM)},
  volume={45},
  number={4},
  pages={634--652},
  year={1998},
  publisher={ACM New York, NY, USA}
}

@article{celis2017multiwinner,
  title={Multiwinner voting with fairness constraints},
  author={Celis, L Elisa and Huang, Lingxiao and Vishnoi, Nisheeth K},
  journal={arXiv preprint arXiv:1710.10057},
  year={2017}
}

@inproceedings{chekuri2009dependent,
  title={Dependent randomized rounding via exchange properties of combinatorial structures},
  author={Chekuri, Chandra and Vondr{\'a}k, Jan and Zenklusen, Rico},
  booktitle={2010 IEEE 51st Annual Symposium on Foundations of Computer Science},
  pages={575--584},
  year={2010},
  organization={IEEE}
}

@inproceedings{chekuri2011multi,
  title={Multi-budgeted matchings and matroid intersection via dependent rounding},
  author={Chekuri, Chandra and Vondr{\'a}k, Jan and Zenklusen, Rico},
  booktitle={Proceedings of the twenty-second annual ACM-SIAM symposium on Discrete Algorithms},
  pages={1080--1097},
  year={2011},
  organization={SIAM}
}

@article{grandoni2014new,
  title={New approaches to multi-objective optimization},
  author={Grandoni, Fabrizio and Ravi, Ramamoorthi and Singh, Mohit and Zenklusen, Rico},
  journal={Mathematical Programming},
  volume={146},
  number={1},
  pages={525--554},
  year={2014},
  publisher={Springer}
}

@article{el2020fairness,
  title={Fairness in streaming submodular maximization: Algorithms and hardness},
  author={El Halabi, Marwa and Mitrovi{\'c}, Slobodan and Norouzi-Fard, Ashkan and Tardos, Jakab and Tarnawski, Jakub M},
  journal={Advances in Neural Information Processing Systems},
  volume={33},
  pages={13609--13622},
  year={2020}
}

@inproceedings{halabiFairnessSubmodularMaximization2024,
  title = {Fairness in {{Submodular Maximization}} over a {{Matroid Constraint}}},
  booktitle = {Proceedings of {{The}} 27th {{International Conference}} on {{Artificial Intelligence}} and {{Statistics}} (AISTAT)},
  author = {Halabi, Marwa El and Tarnawski, Jakub and {Norouzi-Fard}, Ashkan and Vuong, Thuy-Duong},
  year = 2024,
  pages = {1027--1035},
  publisher = {PMLR},
  issn = {2640-3498},
}

@article{vondrak2013symmetry,
  title={Symmetry and approximability of submodular maximization problems},
  author={Vondr{\'a}k, Jan},
  journal={SIAM Journal on Computing},
  volume={42},
  number={1},
  pages={265--304},
  year={2013},
  publisher={SIAM}
}

@article{VondrakCZ11,
author = {Chekuri, Chandra and Vondr{\'a}k, Jan and Zenklusen, Rico},
title = {Submodular Function Maximization via the Multilinear Relaxation and Contention Resolution Schemes},
journal = {SIAM Journal on Computing},
volume = {43},
number = {6},
pages = {1831-1879},
year = {2014},
doi = {10.1137/110839655}
}

@inproceedings{feldmanUnifiedContinuousGreedy2011,
  title = {A {{Unified Continuous Greedy Algorithm}} for {{Submodular Maximization}}},
  booktitle = {Proceedings of the 52nd {{Annual IEEE Symposium}} on {{Foundations}} of {{Computer Science}} ({{FOCS}})},
  author = {Feldman, M. and Naor, J. and Schwartz, R.},
  year = 2011,
  pages = {570--579},
  doi = {10.1109/FOCS.2011.46},
  isbn = {978-0-7695-4571-4},
}

@inproceedings{FeldmanNS11,
  author       = {Moran Feldman and
                  Joseph Naor and
                  Roy Schwartz},
  editor       = {Luca Aceto and
                  Monika Henzinger and
                  Jir{\'{\i}} Sgall},
  title        = {Nonmonotone Submodular Maximization via a Structural Continuous Greedy
                  Algorithm - (Extended Abstract)},
  booktitle    = {Automata, Languages and Programming - 38th International Colloquium,
                  {ICALP} 2011, Zurich, Switzerland, July 4-8, 2011, Proceedings, Part
                  {I}},
  series       = {Lecture Notes in Computer Science},
  volume       = {6755},
  pages        = {342--353},
  publisher    = {Springer},
  year         = {2011}
}

@article{madiman2010information,
  title={Information inequalities for joint distributions, with interpretations and applications},
  author={Madiman, Mokshay and Tetali, Prasad},
  journal={IEEE Transactions on Information Theory},
  volume={56},
  number={6},
  pages={2699--2713},
  year={2010},
  publisher={IEEE}
}

@inproceedings{ChekuriPal05,
  author       = {Chandra Chekuri and
                  Martin P{\'{a}}l},
  title        = {A Recursive Greedy Algorithm for Walks in Directed Graphs},
  booktitle    = {46th Annual {IEEE} Symposium on Foundations of Computer Science, {FOCS}
                  2005, Pittsburgh, PA, USA, October 23-25, 2005, Proceedings},
  pages        = {245--253},
  publisher    = {{IEEE} Computer Society},
  year         = {2005},
  url          = {https://doi.org/10.1109/SFCS.2005.9},
  doi          = {10.1109/SFCS.2005.9},
  bibsource    = {dblp computer science bibliography, https://dblp.org}
}

@article{brualdi1969comments,
  title={Comments on bases in dependence structures},
  author={Brualdi, Richard A},
  journal={Bulletin of the Australian Mathematical Society},
  volume={1},
  number={2},
  pages={161--167},
  year={1969},
  publisher={Cambridge University Press}
}

@inproceedings{10.1137/20M1312988,
  title={${O}(\log^2{k}/\log\log{k})$--approximation algorithm for directed steiner tree: a tight quasi-polynomial-time algorithm},
  author={Grandoni, Fabrizio and Laekhanukit, Bundit and Li, Shi},
  booktitle={Proceedings of the 51st Annual ACM SIGACT Symposium on Theory of Computing},
  pages={253--264},
  year={2019}
}

@inproceedings{halperin2003polylogarithmic,
  title = {Polylogarithmic {{Inapproximability}}},
  author={Halperin, Eran and Krauthgamer, Robert},
  booktitle = {Proceedings of {STOC}},
  pages={585--594},
  year={2003}  
}

@book{schrijver2003combinatorial,
  title={Combinatorial optimization: polyhedra and efficiency},
  author={Schrijver, Alexander},
  volume={24},
  year={2003},
  publisher={Springer Science \& Business Media}
}

@inproceedings{ene2016constrained,
  title={Constrained submodular maximization: Beyond 1/e},
  author={Ene, Alina and Nguyen, Huy L},
  booktitle={2016 IEEE 57th Annual Symposium on Foundations of Computer Science (FOCS)},
  pages={248--257},
  year={2016},
  organization={IEEE}
}

@inproceedings{buchbinder2024constrained,
  title={Constrained submodular maximization via new bounds for dr-submodular functions},
  author={Buchbinder, Niv and Feldman, Moran},
  booktitle={Proceedings of the 56th annual ACM symposium on theory of computing},
  pages={1820--1831},
  year={2024}
}

@article{buchbinder2019constrained,
  title={Constrained submodular maximization via a nonsymmetric technique},
  author={Buchbinder, Niv and Feldman, Moran},
  journal={Mathematics of Operations Research},
  volume={44},
  number={3},
  pages={988--1005},
  year={2019},
  publisher={INFORMS}
}

@incollection{buchbinder2018submodular,
  author    = {Niv Buchbinder and Moran Feldman},
  title     = {Submodular Functions Maximization Problems},
  booktitle = {Handbook of Approximation Algorithms and Metaheuristics, Volume 1: Methodologies and Traditional Applications},
  editor    = {Teofilo F. Gonzalez},
  edition   = {2nd},
  pages     = {753--788},
  year      = {2018},
  publisher = {Chapman and Hall/CRC},
  doi       = {10.1201/9781351236423},
  note      = {Also available as arXiv preprint arXiv:1802-07098}
}

@article{bilmes2022submodularity,
  author        = {Jeff Bilmes},
  title         = {Submodularity In Machine Learning and Artificial Intelligence},
  journal       = {arXiv preprint arXiv:2202.00132},
  year          = {2022},
  eprint        = {2202.00132},
  archivePrefix = {arXiv},
  primaryClass  = {cs.LG},
  url           = {https://arxiv.org}
}

@incollection{krause2014submodular,
  author    = {Andreas Krause and Daniel Golovin},
  title     = {Submodular Function Maximization},
  booktitle = {Tractability: Practical Approaches to Hard Problems},
  editor    = {Lucas Bordeaux and Youssef Hamadi and Pushmeet Kohli},
  pages     = {71--104},
  year      = {2014},
  publisher = {Cambridge University Press},
  doi       = {10.1017/CBO9781139177801.004},
  url       = {https://las.inf.ethz.ch/submodularity/}
}

@article{dutta2026submodular,
  author        = {Dutta, Shamak and Gharesifard, Bahman and Smith, Stephen L.},
  title         = {Submodular Optimization with Applications to Decision and Control},
  journal       = {arXiv preprint arXiv:2606.10192},
  year          = {2026},
  archivePrefix = {arXiv},
  eprint        = {2606.10192},
  primaryClass  = {math.OC},
  url           = {https://arxiv.org/abs/2606.10192}
}

@article{nemhauser1978analysis,
  author  = {Nemhauser, George L. and Wolsey, Laurence A. and
             Fisher, Marshall L.},
  title   = {An Analysis of Approximations for Maximizing Submodular Set
             Functions---{I}},
  journal = {Mathematical Programming},
  volume  = {14},
  number  = {1},
  pages   = {265--294},
  year    = {1978},
  doi     = {10.1007/BF01588971}
}

@article{nemhauser1978best,
  author  = {Nemhauser, George L. and Wolsey, Laurence A.},
  title   = {Best Algorithms for Approximating the Maximum of a Submodular
             Set Function},
  journal = {Mathematics of Operations Research},
  volume  = {3},
  number  = {3},
  pages   = {177--188},
  year    = {1978},
  doi     = {10.1287/moor.3.3.177}
}

@article{bach2013learning,
  author  = {Bach, Francis},
  title   = {Learning with Submodular Functions: A Convex Optimization
             Perspective},
  journal = {Foundations and Trends in Machine Learning},
  volume  = {6},
  number  = {2-3},
  pages   = {145--373},
  year    = {2013},
  doi     = {10.1561/2200000039}
}

@inproceedings{moshkovitz2012projection,
  title={The projection games conjecture and the NP-hardness of ln n-approximating set-cover},
  author={Moshkovitz, Dana},
  booktitle={International Workshop on Approximation Algorithms for Combinatorial Optimization},
  pages={276--287},
  year={2012},
  organization={Springer}
}

@article{rohwedder2026markov,
  author        = {Rohwedder, Lars and Zenklusen, Rico.},
  title         = {Strong and Compact Policies for Submodular Markov Decision Processes via LP-Based Submodular Orienteering},
  journal       = {arXiv preprint arXiv:2609.15539},
  year          = {2026},
  archivePrefix = {arXiv},
  eprint        = {2609.15539},
  primaryClass  = {cs.DS},
  url           = {https://arxiv.org/pdf/2609.15539}
}

\end{document}